\newtheorem*{maintheorem}{Main Theorem}
\theoremstyle{plain}
\newtheorem{corollary}{Corollary}
\newtheorem*{definition}{Definition}
\newtheorem*{lemma}{Lemma}
\newtheorem{proposition}{Proposition}
\newtheorem*{remark}{Remark}
\numberwithin{equation}{section}
\begin{document}
\title[Q-adapted It\^{o} formula of Stochastic Calculus]{Multiple $\mathrm{Q}
$-Adapted Integrals and It\^{o} Formula of Noncommutative Stochastic
Calculus in Fock Space}
\author{V. P. Belavkin}
\address{Mathematics Department, University of Nottingham,\\
NG7 2RD, UK.}
\email{vpb@maths.nott.ac.uk}
\thanks{}
\author{M. F. Brown.}
\address{Mathematics Department, University of Nottingham,\\
NG7 2RD, UK.}
\email{pmxmb1@nottingham.ac.uk}
\date{May 2011}
\keywords{Quantum dynamics, Noncommutative analysis, $\mathrm{Q}$-adapted
process.}

\begin{abstract}
We study the continuity property of multiple Q-adapted quantum stochastic
integrals with respect to noncommuting integrands given by the non-adapted
multiple integral\ kernels\ in Fock scale. The noncommutative algebra of
relatively (exponentially) bounded nonadapted quantum stochastic processes
is studied in the kernel form as introduced in \cite{Be91}. The differential
Q-adapted formula generalizing It\^{o} product formula for adapted integrals
is presented in both strong and weak sense as a particular case of the
quantum stochastic nonadapted It\^{o} formula.
\end{abstract}

\maketitle
\tableofcontents
\section{Introduction} Non-commutative
generalization of adapted It\^{o} stochastic calculus, developed by Hudson
and Parthasarathy (HP) in \cite{HudP84}, gave an adequate mathematical tool
for studying unitary and endomorphic cocycles of open quantum dynamical
systems singularly interacting with a boson quantum-stochastic field. The
adapted HP quantum stochastic calculus and its kernel variant {\cite{Mey87}}%
, {\cite{LinM88a}} also made it possible to solve the old quantum
measurement problem by describing such systems by quantum stochastic
Langevin equation {\cite{EvaH88}} with continuously observed output field
\cite{Be89d}, and constructing a quantum filtration theory {\cite{Be92}}
which explained a continuous spontaneous collapse under such observation in
terms of now famous quantum stochastic Master equations, first derived in
\cite{Be88a},\cite{Be91c}. The Belavkin filtering equations gave examples of
quantum stochastic non-unitary, even nonlinear, non-stationary, adapted
evolution equations in Hilbert and operator spaces normalized only in the
mean as the exponential martingales. Their solutions require a proper
definition of chronologically ordered unbounded quantum stochastic
exponentials of \ noncommuting operators and maps which cannot be studied
within the original HP-calculus approach and its extensions {\cite{AccF88},%
\cite{AccQ89}},{\cite{ParS86}} to the bounded QS (quantum stochastic)
semi-martingales. Moreover, the perturbation theory for such evolutions,
usually studied by applying Duhamel principle in non-stochastic case,
requires the development of quantum non-adapted calculus, since the
stochastic Duhamel formula cannot be written in terms of the adapted
stochastic integral even in the classical case. In order to solve these
problems, a more general Fock scale approach to quantum stochastic calculus
and integration was developed in \cite{Be88},{\cite{Be91},\cite{Be92c}}
which does not require the usual boundedness and adaptedness of the
stochastic integrands and the resulting quantum integral semimartingales.
Based on the canonical Pseudo-Poisson representation of quantum It\^{o}
algebra discovered by Belavkin in \cite{Be88}, the Fock scale analysis gives
a simple constructive expression of the quantum functional It\^{o} formula
and is essentially basis free, allowing a unified algebraic treatment of
any, even infinite number of quantum noise modes.

In this paper we give a brief overview of nonadapted quantum stochastic
calculus closely following the nonadapted quantum It\^{o} formula part of
the paper {\cite{Be92c}, but consider the case of }$\mathrm{Q}$-adapted
quantum stochastic integrals introduced in \cite{BelB10} in the natural Fock
scale of Hilbert spaces as a special nonadapted case of {\cite{Be92c}}. To
this end we shall explore the Belavkin notation for indefinite $\star $%
-algebraic structure of the kernel calculus as the general property of a
natural pseudo-Euclidean representation for Sch\"{u}rmann's tripples
associated with infinitely divisible states, obtained by Belavkin for the
general nonstationary case in {\cite{Be92c}}. As a particular case of
non-adapted It\^{o} formula we establish a non-commutative $\mathrm{Q}$%
-adapted generalization of the adapted It\^{o} formula that is the principal
formula of the classical stochastic calculus. In the $\mathrm{Q}=\mathrm{I}$
case this formula coincides with the well-known Hudson-Parthasarathy formula
\cite{HudP84} for the product of a pair of non-commuting quantum processes
and gives its functional extension. In the commutative case this gives a $%
\mathrm{Q}$-adapted generalization of the It\^{o} formula for classical
stochastic processes as the case of the general nonadapted classical It\^{o}
formula, first obtained in the case of Wiener integrals in a weak form by
classical stochastic methods by Nualart \cite{NuaP88}. We also note that
Fock scales are also used in the stochastic analysis of classical white
noise by Hida, Kuo, Pothoff, Streit, starting from \cite{Hid80},\cite{PotS89}%
, and independently by Berezanskii and Kondrat'ev, starting from \cite%
{BerK88}. However, while the classical stochastic analysis is mostly
concerned with the study of the generalized stochastic functionals in
nuclear Fock scales, the quantum noise analysis is concerned with the
analysis of generalized operators (kernels) in nonnuclear Fock scale which
was first introduced by Belavkin \cite{Be88},{\cite{Be91},\cite{Be92c}} and
recently used also by Ji and Obata in their abstract Fock space approach to
nondifferential quantum stochastic analysis.

Here we shall consider mostly differetial problems of quantum stochastic
analysis, adopting Guichardet representation of Fock space as $\mathit{L}^{2}
$-space over the finite subsets from a nonatomic measure space, regarding
these subsets as almost totally ordered chains following the notation from {%
\cite{Be91},\cite{Be92c}}. In this notation the integral quantum stochastic
calculus is similar in spirit to the kernel calculus of
Maassen-Lindsay-Meyer {\cite{LinM88a}}, {\cite{Mey87}}, with the difference
that all the main objects are constructed not in terms of Maassen-Meyer
kernels but in terms of the operators kernels represented in the Fock state
space. In this unifying approach we employ a much more general notion of
multiple stochastic integral, non-adapted in general, but focusing on $%
\mathrm{Q}$-adapted processes, which reduces to the notion of the kernel
representation of an operator only in the case of a scalar (non-random)
operator integrand. The possibility of defining a non-adapted single
integral in terms of the kernel calculus in the case of quantum single mode
noise was shown independently by Belavkin \cite{Be88} and Lindsay {\cite%
{Lin90}}, but the general repeated and the multiple nonadapted integrals
were first introduced and studied within the quantum stochastic analysis in
Fock scales in {\cite{Be91},\cite{Be92c}}.

\section{Rigged Guichardet-Fock Space}

Let $(\mathbb{X},\lambda)$ be an essentially ordered space, that is, a
measurable space $\mathbb{X}$ with a $\sigma$-finite measure $\lambda:%
\mathfrak{F}_{\mathbb{X}}\ni \boldsymbol{\bigtriangleup}\mapsto\lambda\left(
\boldsymbol{\bigtriangleup }\right) \geq0$ and an ordering relation $x\leq
x^{\prime}$ with the property that any $n$-tuple $\left(
x_{1},\ldots,x_{n}\right) \in \mathbb{X}^{n}$ can be identified up to a
permutation with a chain $\vartheta=\{x_{1}<\cdots<x_{n}\}$ modulo the
product measure $\prod_{i=1}^{n}\mathrm{d}x_{i}$ of $\mathrm{d}%
x:=\lambda\left( \mathrm{d}x\right) $. In particular we consider the order
induced by the linear order in $\mathbb{R}_+$ by a measurable map $t:\mathbb{%
X}\rightarrow\mathbb{R}_+$ relatively to which $\lambda$ is absolutely
continuous with respect to the Lebesque measure $\mathrm{d}t$ on $\mathbb{R}%
_+$, then $x<x^\prime\Leftrightarrow t(x)<t(x^\prime)$, where $t(x)$ is
identified with the time of the point $x$.

We shall identify the finite chains $\vartheta$ with increasingly indexed $n
$-tuples $\boldsymbol{s}\equiv(x_{1},\ldots,x_{n})$ with $x_{i}\in \mathbb{X}
$, $x_{1}<\cdots<x_{n}$, denoting by $\mathcal{X}=\sum_{n=0}^{\infty}%
\mathcal{X}_{n}$ the set of all finite chains as the union of the sets
\begin{equation*}
\mathcal{X}_{n}=\{\boldsymbol{s}\in \mathbb{X}^{n}:x_{1}<\cdots<x_{n}\}
\end{equation*}
with $\mathcal{X}_{0}=\{\emptyset \}$ containing the only one element $%
\emptyset \in \mathcal{X}_{0}$, the empty chain $\emptyset =\mathbb{X}^{0}$
identified with empty subset of $\mathbb{X}$. We introduce a measure
`element' $\mathrm{d}\vartheta=\prod_{x\in\vartheta}\mathrm{d}x$ on $%
\mathcal{X}$ induced by the direct sum $\oplus_{n=0}^{\infty}\lambda^{%
\otimes n}\left( \boldsymbol{\bigtriangleup}_{n}\right) ,\boldsymbol{%
\bigtriangleup}_{n}\in\mathfrak{F}_{\mathbb{X}}^{\otimes n}$ of product
measures $\mathrm{d}\boldsymbol{s}=\prod_{i=1}^{n}\mathrm{d}x_{i}$ on $%
\mathbb{X}^{n}$ with the unit mass $\mathrm{d}\vartheta=1$ at the only
atomic point $\vartheta=\emptyset$.

Let $\{\mathfrak{k}_{x}:x\in \mathbb{X}\}$ be a family of Hilbert spaces $%
\mathfrak{k}_{x}$, let $\mathfrak{p}_{0}$ be an additive semigroup of
nonnegative essentially measurable locally bounded functions $q:\mathbb{X}%
\rightarrow\mathbb{R}_{+}$ with zero included $0\in\mathfrak{p}_{0}$, and
let $\mathfrak{p}_{1}=\{1+q_{0}:q_{0}\in\mathfrak{p}_{0}\}$. We denote by $%
\mathit{K}_{\star}(q)$ the Hilbert space of essentially measurable
vector-functions $\mathrm{k}:x\mapsto\mathrm{k}(x)\in\mathfrak{k}_{x}$ which
are square integrable with the weight $q\in\mathfrak{p}_{1}$:
\begin{equation*}
\left\Vert \mathrm{k}\right\Vert (q)=\left( \int\left\Vert \mathrm{k}%
(x)\right\Vert _{x}^{2}q(x)\mathrm{d}x\right) ^{1/2}<\infty.
\end{equation*}
With $q\geq1$, any space $\mathit{K}_{\star}(q)$ can be embedded into the
Hilbert space $\mathcal{K}_\ast=\mathit{K}_{\star}(1)$, and the intersection $%
\cap_{q\in\mathfrak{p}_{1}}\mathit{K}_{\star}(q)\subseteq\mathfrak{k}$ can
be identified with the projective limit $\mathit{K}_{+}=\lim
_{q\rightarrow\infty}\mathit{K}_{\star}(q)$. This follows from the facts
that the function $\left\Vert \mathrm{k}\right\Vert (q)$ is increasing: $%
q\leq p\Rightarrow\left\Vert \mathrm{k}\right\Vert (q)\leq\left\Vert \mathrm{%
k}\right\Vert (p)$, and so $\mathit{K}_{\star}(p)\subseteq\mathit{K}%
_{\star}(q)$, and that the set $\mathfrak{p}_{1}$ is directed in the sense
that for any $q=1+r$ and $p=1+s$, $r,s\in\mathfrak{p}_{0}$, there is a
function in $\mathfrak{p}_{1}$ majorizing $q$ and $p$ (we can take for
example $q+p-1=1+r+s\in \mathfrak{p}_{1}$).

The dual space $\mathit{K}_{\star}^{-}$ to$\ \mathit{K}_{+}$ is the space of
generalized vector-functions $\mathrm{f}\left( x\right) $ defining the
continuous functionals
\begin{equation*}
\left\langle \mathrm{f}\mid\mathrm{k}\right\rangle =\int\left\langle \mathrm{%
f}(x)\mid\mathrm{k}(x)\right\rangle \,\mathrm{d}x,\quad\mathrm{k}\in\mathit{K%
}_{+}.
\end{equation*}
It is the inductive limit $\mathit{K}_{-}=\lim_{q\rightarrow0}\mathit{K}%
_{\star}(q)$ in the opposite scale $\{\mathit{K}_{\star}(q):q\in\mathfrak{p}%
_{-}\}$, where $\mathfrak{p}_{-}$ is the set of functions $q:\mathbb{X}%
\rightarrow (0,1]$ such that $1/q\in\mathfrak{p}_{1}$, which is the union $%
\cup _{q\in\mathfrak{p}_{-}}\mathit{K}_{\star}(q)$ of the inductive family
of Hilbert spaces $\mathit{K}_{\star}(q),q\in\mathfrak{p}_{-}$, with the
norms $\left\Vert \mathrm{k}\right\Vert (q)$, containing as the minimal the
space $\mathcal{K}_{\ast}$. Thus we obtain the
Gel'fand chain
\begin{equation*}
\mathit{K}_{+}\subseteq\mathit{K}_{\star}(q_{+})\subseteq\mathcal{K}_{\ast
}\subseteq\mathit{K}_{\star}(q_{-})\subseteq\mathit{K}_{-}
\end{equation*}
in the extended scale $\{\mathit{K}_{\star}(q):q\in\mathfrak{p}\}$, where $%
\mathfrak{p}=\mathfrak{p}_{-}\cup\mathfrak{p}_{1}$, with $q_{+}\in\mathfrak{p%
}_{1},\,q_{-}\in\mathfrak{p}_{-}$. The dual space $\mathit{K}_{+}^{\star}=%
\mathit{K}^{-}$ is the space of the continuous linear functionals on $%
\mathit{K}_{+}$ containing the Hilbert space $\mathcal{K}$ called the\emph{\
rigged space} with respect to the dense subspace $\mathit{K}^{+}=\mathit{K}%
_{-}^{\star}$ of $\mathcal{K}$ equipped with the projective convergence in
the scale $\left\Vert \mathrm{k}^{\ast}\right\Vert \left( q\right)
=\left\Vert \mathrm{k}\right\Vert \left( q\right) $ for $q\in\mathfrak{p}_{1}
$.

We can similarly define a Fock-Gel'fand triple $(\mathit{F}_{+},\mathcal{F}%
_{\ast},\mathit{F}_{-})$ with%
\begin{equation*}
\mathit{F}_{+}=\cap_{q\in\mathfrak{p}_{1}}\mathit{F}_{\star}(q),\;\mathcal{F}%
_{\ast}=\mathit{F}_{\star}(1),\;\mathit{F}_{-}=\cup_{q\in\mathfrak{p}_{-}}%
\mathit{F}_{\star}(q),
\end{equation*}
for the Hilbert scale $\{\mathit{F}_{\star}(q):q\in\mathfrak{p}\}$ of the
symmetric Fock spaces $\mathit{F}_{\star}(q)=\oplus_{n=0}^\infty
K_\star^{(n)}(q)$ over $\mathit{K}_{\star}(q)$, where $K_\star^{(0)}(q)=%
\mathbb{C}$, $K_\star^{(1)}(q)=K_\star(q)$, and each $K_\star^{(n)}(q)$ for $%
n>1$ is given by the product-weight $q_n(x_1,\ldots,x_n)=\prod_{i=1}^nq(x_i)$
on $\mathbb{X}^n$. We shall consider the Guichardet {\cite{Gui72}}
representation of the symmetric tensor-functions $\psi_n\in K^{(n)}_\star(q)$
regarding them as the restrictions $\psi|\mathcal{X}_n$ of the functions $%
\psi:\vartheta\mapsto\psi(\vartheta)\in K_\star^\otimes(\vartheta)$ with
sections in the Hilbert products $\mathit{K}_{\star}^{\otimes}(\vartheta)=%
\otimes _{x\in\vartheta}\mathfrak{k}_{x}$, square integrable with the
product weight $q(\vartheta)=\prod_{x\in\vartheta}q(x)$:
\begin{equation*}
\Vert\psi\Vert(q)=\left( \int\Vert\psi(\vartheta)\Vert^{2}q(\vartheta )\,%
\mathrm{d}\vartheta\right) ^{1/2}<\infty.
\end{equation*}
The integral here is over all chains $\vartheta\in\mathcal{X}$ and defines
the pairing on $\mathit{F}_{+}$ by
\begin{equation*}
\left\langle \psi\mid\psi\right\rangle =\int\left\langle \psi(\vartheta
)\mid\psi(\vartheta)\right\rangle \,\mathrm{d}\vartheta,\quad\psi\in \mathit{%
F}_{+}.
\end{equation*}
In more detail we can write this in the form
\begin{equation*}
\int\Vert\psi(\vartheta)\Vert^{2}q(\vartheta)\mathrm{d}\vartheta=\sum
_{n=0}^{\infty}\int\limits_{0\leq t_{1}<}\cdots\int\limits_{<t_{n}<\infty
}\Vert\psi(x_{1},\ldots,x_{n})\Vert^{2}\prod_{i=1}^{n}q(x_{i})\mathrm{d}%
x_{i},
\end{equation*}
where the $n$-fold integrals for $\psi_n\in K_\star^{(n)}$ are taken over
simplex domains $\mathcal{X}_{n}=\{\boldsymbol{s}\in \mathbb{X}%
^{n}:t(x_{1})<\cdots<t(x_{n})\}$.

One can easily establish an isomorphism between the space $\mathit{F}%
_{\star}(q)$ and the symmetric (or antisymmetric) Fock space over $\mathit{K}%
_{\star}(q)$ with a nonatomic measure $\mathrm{d}x$ in $\mathbb{X}$. It is
defined by the isometry
\begin{equation*}
\Vert\psi\Vert(q)=\left( \sum_{n=0}^{\infty}\frac{1}{n!}\idotsint\Vert
\psi(x_{1},\ldots,x_{n})\Vert^{2}\prod_{i=1}^{n}q(x_{i})\mathrm{d}%
x_{i}\right) ^{1/2},
\end{equation*}
where the functions $\psi(x_{1},\ldots x_{n})$ can be extended to the whole
of $\mathbb{X}^{n}$ in a symmetric (or antisymmetric) way uniquely up to the
measure zero due to nonatomicity of $\mathrm{d}x$ on $\mathbb{X}$.

Finally let $\mathfrak{h}$ be a Hilbert space called the initial space for
the Hilbert products $\mathcal{H}_{\ast}=\mathfrak{h}\otimes\mathcal{K}%
_{\ast} $ and $\mathcal{G}_{\ast}=\mathfrak{h}\otimes\mathcal{F}_{\ast}$. We
consider the Hilbert scale $\mathit{G}_{\star}(q)=\mathfrak{h}\otimes\mathit{%
F}_{\star}(q)$, $q\in\mathfrak{p}$ of complete tensor products of $\mathfrak{%
h}$ and the Fock spaces over $\mathit{K}_{\star}(q)$, and we put%
\begin{equation*}
\mathit{G}_{+}=\cap\mathit{G}_{\star}(q),\;\;\mathit{G}_{-}=\cup\mathit{G}%
_{\star}(q)
\end{equation*}
which constitute the Gel'fand triple $\mathit{G}_{+}\subseteq\mathcal{G}%
_{\ast}\subseteq\mathit{G}_{-}$ dual to $\mathit{G}^{+}\subseteq \mathcal{G}%
\subseteq\mathit{G}^{-}$ of the Hermitian adjoint bra-spaces $\mathit{G}^{+}=%
\mathit{G}_{+}^{\ast}$, $\mathcal{G}=\mathcal{G}_{\ast }^{\ast}$, $\mathit{G}%
^{-}=\mathit{G}_{-}^{\ast}$.

\section{Triangular Kernels as Generalized Operators in Fock Space}

Now we consider matrix chains $\vartheta^\cdot_\cdot=[\vartheta^\mu_\nu]^{%
\mu=-,\circ,+}_{\nu=-,\circ,+}$ and we define the triangular operator valued
kernels
\begin{align}
& T(\vartheta^\cdot_\cdot)= 0\text{ if }\vartheta^\mu_\nu\neq0\text{ for
each }\mu>\nu,  \notag \\
& T(\vartheta^\cdot_\cdot)= 1\left( \vartheta_{-}^{-}\right) T\left(
\boldsymbol{\vartheta}\right) 1\left( \vartheta_{+}^{+}\right) \text{
otherwise},
\end{align}
where $1\left( \vartheta_{-}^{-}\right)=1=1\left( \vartheta_{+}^{+}\right)$
in $\mathbb{C}$, and
\begin{equation}
T(\boldsymbol{\vartheta})=T%
\begin{pmatrix}
\vartheta_{+}^{-} & \vartheta_{\circ}^{-} \\
\vartheta_{+}^{\circ} & \vartheta_{\circ}^{\circ}%
\end{pmatrix}
:\mathfrak{k}^{\otimes}(\vartheta_{\circ}^{-}\sqcup\vartheta_{\circ}^{\circ
})\otimes\mathfrak{h}\rightarrow\mathfrak{k}^{\otimes}(\vartheta_{\circ
}^{\circ}\sqcup\vartheta_{+}^{\circ})\otimes\mathfrak{h},   \label{two1}
\end{equation}
is an operator-valued function of $\boldsymbol{\vartheta}=\left(
\vartheta_{\nu}^{\mu}\right)^{\mu=-,\circ}_{\nu=+,\circ} $ satisfying the
integrability condition $\Vert T\Vert_{q}(r)<\infty$ for some $r^{-1}\in%
\mathfrak{p}_{0}$ and $q\in \mathfrak{p}_{1}$ with respect to the norms
\begin{equation*}
\Vert T\Vert_{q}(r)=\int\mathrm{d}\vartheta_{+}^{-}\left( \iint \mathrm{ess}%
\sup_{\vartheta_{\circ}^{\circ}}\left\{\frac{\Vert T(\boldsymbol{\vartheta}%
)\Vert}{q(\vartheta_{\circ}^{\circ})}\right\}^{2}r(\vartheta_{+}^{\circ}%
\sqcup\vartheta_{\circ}^{-})\mathrm{d}\vartheta _{+}^{\circ}\mathrm{d}%
\vartheta_{\circ}^{-}\right) ^{1/2}.
\end{equation*}
Note that $T\left( \boldsymbol{\vartheta}\right) \in\mathfrak{L}\left(
\mathfrak{h}\right) $ in the scalar case $\mathfrak{k}_x=\mathbb{C}$.

We would like to consider the QS integral operators as the continuous maps $%
\mathrm{T}:G_+\rightarrow G_-$ representing the triangular kernels $%
T(\vartheta^\cdot_\cdot)$. The representation, denoted by $\boldsymbol{%
\epsilon}$, is explicitly defined by
\begin{equation}
\lbrack\boldsymbol{\epsilon}(T)\chi](\vartheta)=\sum_{\vartheta_{\circ}^{%
\circ}\sqcup\vartheta_{+}^{\circ}=\vartheta}\iint T%
\begin{pmatrix}
\vartheta_{+}^{-}, & \vartheta_{\circ}^{-} \\
\vartheta_{+}^{\circ}, & \vartheta_{\circ}^{\circ}%
\end{pmatrix}
\chi(\vartheta_{\circ}^{\circ}\sqcup\vartheta_{\circ}^{-})\mathrm{d}%
\vartheta_{\circ}^{-}\mathrm{d}\vartheta_{+}^{-}   \label{two2}
\end{equation}
on $\chi\ \in\mathit{G}_{+}$, which may be given by the operator-valued
multiple integral \cite{Be91}
\begin{equation}
\lbrack\boldsymbol{\imath}_{0}^{t}(\mathrm{M})\chi](\vartheta)=\sum
_{\upsilon_{\circ}^{\circ}\sqcup\upsilon_{+}^{\circ}\subseteq\vartheta^{t}}%
\int_{\mathcal{X}^{t}}\int_{\mathcal{X}^{t}}[\mathrm{M}(\boldsymbol{\upsilon
})\mathring{\chi}(\upsilon_{\circ}^{-}\sqcup\upsilon_{\circ}^{\circ
})](\vartheta_{-}^{\circ})\mathrm{d}\upsilon_{+}^{-}\mathrm{d}%
\upsilon_{\circ }^{-},   \label{2onee}
\end{equation}
on $\mathcal{X}^{\infty }=\mathcal{X}$ of the multiple integrand $\mathrm{M}(%
\boldsymbol{\upsilon })$ as trivially vacuum adapted such that $\mathrm{M}(%
\boldsymbol{\upsilon })=T(\boldsymbol{\upsilon })\otimes \mathrm{P}%
_{\emptyset }$, where $[\mathrm{P}_{\emptyset }\chi ](\vartheta )=\delta
_{\emptyset }(\vartheta )\chi (\vartheta )$, and $\delta _{\emptyset
}(\vartheta )=1$ if $\vartheta =\emptyset $ and is $0$ otherwise, such that $%
[\mathrm{M}(\boldsymbol{\upsilon })\mathring{\chi}(\upsilon _{\circ }^{\circ
}\sqcup \upsilon _{\circ }^{-})](\varkappa )=0$ if $\varkappa \neq \emptyset
$.

The integrand $\mathrm{M}(\boldsymbol{\upsilon })$ is not, however, a unique
choice in (\ref{2onee}), one can take any kernel-valued function $M\left(
\boldsymbol{\upsilon },\boldsymbol{\varkappa }\right) $ satisfying $T\left(
\boldsymbol{\vartheta }\right) =\sum_{\varkappa \sqcup \upsilon _{\circ
}^{\circ }=\vartheta _{\circ }^{\circ }}M\left( \boldsymbol{\upsilon }%
,\varkappa \right) $ to obtain the operator $\boldsymbol{\epsilon }(T)$. We
shall choose $M(\boldsymbol{\upsilon },\varkappa )=M(\boldsymbol{\upsilon }%
)\otimes \mathrm{Q}^{\otimes }\left( \varkappa \right) $ as trivially $%
\mathrm{Q}$-adapted multiple integrand \cite{BelB10} such that the
operator-valued $\mathrm{I}$-adapted integrand of (\ref{2onee}) is given
instead as $\mathrm{M}(\boldsymbol{\upsilon })=M(\boldsymbol{\upsilon }%
)\otimes \mathrm{Q}^{\otimes }$ to obtain%
\begin{equation*}
\left[ \boldsymbol{\imath }_{0}^{\infty }(M\otimes \mathrm{Q}^{\otimes
})\chi \right] \left( \vartheta \right) =[\boldsymbol{\epsilon }(T)\chi
](\vartheta ).
\end{equation*}%
Here, the operator $\mathrm{Q}^{\otimes}$, given on chains $\vartheta\in%
\mathcal{X}$ as $\mathrm{Q}^\otimes(\vartheta)=\otimes_{x\in\vartheta}%
\mathrm{Q}(x)$ in the space $\mathfrak{k}^\otimes(\vartheta)$, is defined as
a mapping $\mathit{F}_{+}\rightarrow F_-$ such that there exists a $p\in%
\mathfrak{p}$ with
\begin{equation*}
\|\mathrm{Q}^\otimes\|_p=\sup_{\chi\in F_\star(p)}\left\{\frac{\|\mathrm{Q}%
^\otimes\chi\|(p^{-1})}{\|\chi\|(p)}\right\}<\infty.
\end{equation*}
In fact the vacuum projector corresponds to the case $\mathrm{Q}=\mathrm{O}$
such that $\mathrm{P}_\emptyset=\mathrm{O}^\otimes$, and then of course $M(%
\boldsymbol{\upsilon})=T(\boldsymbol{\upsilon})$. The kernel $M(\boldsymbol{%
\upsilon })$ is Maassen-Meyer kernel integrand generalized to the trivially $%
\mathrm{Q}$-adapted case. Indeed, it is uniquely $\mathrm{Q}$-related to $T(%
\boldsymbol{\vartheta })$, it is given by the transformation
\begin{equation*}
M%
\begin{pmatrix}
\upsilon_{+}^{-}, & \upsilon_{\circ}^{-} \\
\upsilon_{+}^{\circ}, & \upsilon_{\circ}^{\circ}%
\end{pmatrix}
=\sum_{\vartheta\subseteq\upsilon_{\circ}^{\circ}}T%
\begin{pmatrix}
\upsilon_{+}^{-}, & \upsilon_{\circ}^{-} \\
\upsilon_{+}^{\circ}, & \vartheta%
\end{pmatrix}
\otimes(-\mathrm{Q})^{\otimes}(\upsilon_{\circ}^{\circ }\boldsymbol{\setminus%
}\vartheta),
\end{equation*}
that is the Meyer transformation of $\mathrm{T}$ defining the trivially
adapted QS-multiple integrand for the integral representation $\mathrm{T}=%
\boldsymbol{\imath }_{0}^{\infty }(\mathrm{M})$ when $\mathrm{Q}=\mathrm{I}$%
. This simply follows from the definition of the action
\begin{equation*}
[\mathrm{M}(\boldsymbol{\upsilon})\mathring{\chi}(\upsilon^-_\circ\sqcup%
\upsilon^\circ_\circ)](\vartheta^\circ_-) =M(\boldsymbol{\upsilon})\otimes%
\mathrm{Q}^\otimes(\vartheta^\circ_-)\chi
(\upsilon^-_\circ\sqcup\upsilon^\circ_\circ\sqcup\vartheta^\circ_-)
\end{equation*}
on $\mathring{\chi}(\upsilon,\vartheta)=\chi(\upsilon\sqcup\vartheta)$, for
the kernel
\begin{equation*}
T%
\begin{pmatrix}
\vartheta_{+}^{-}, & \vartheta_{\circ}^{-} \\
\vartheta_{+}^{\circ}, & \vartheta_{\circ}^{\circ}%
\end{pmatrix}
=\sum_{\upsilon\subseteq\vartheta_{\circ}^{\circ}}M%
\begin{pmatrix}
\vartheta_{+}^{-}, & \vartheta_{\circ}^{-} \\
\vartheta_{+}^{\circ}, & \upsilon%
\end{pmatrix}
\otimes \mathrm{Q}^{\otimes}(\vartheta_{\circ}^{\circ }\boldsymbol{\setminus}%
\upsilon)
\end{equation*}
that is the M\"obius transformation of $M(\boldsymbol{\upsilon})$ when $%
\mathrm{Q}=\mathrm{I}$, inverting the Meyer transformation.

It was shown in \cite{BelB10}, using the estimate for nonadapted integrals
from {\cite{Be92c},} that if $\mathrm{M}$ is a $q$-contractive ampliation of
$M$, such that $\mathrm{M}=M\otimes\mathrm{Q}^\otimes$ and $\|\mathrm{Q}%
^\otimes\|_q\leq1$, then $\Vert\mathrm{T}\Vert_{p}\leq\Vert
M\Vert_{\infty}^{s}(r)$ for $p\geq r^{-1}+q+s^{-1}$, where $\mathrm{T}=%
\boldsymbol{\imath}^\infty_0(\mathrm{M})\equiv\boldsymbol{\epsilon}(T)$, and
\begin{equation*}
\Vert M\Vert_{t}^{s}(r)=\int_{\mathcal{X}^{t}}\mathrm{d}\upsilon_{+}^{-}%
\left(\int_{\mathcal{X}^{t}}\mathrm{d}\upsilon_{+}^{\circ}\int _{\mathcal{X}%
^{t}}\mathrm{d}\upsilon_{\circ}^{-}\mathrm{ess}\sup
_{\upsilon_{\circ}^{\circ}\in\mathcal{X}^{t}}\{s(\upsilon_{\circ}^{\circ
})\Vert M(\boldsymbol{\upsilon})\Vert\}^{2}r(\upsilon_{+}^{\circ}\sqcup%
\upsilon_{\circ}^{-})\right)^{1/2}.
\end{equation*}
However, using the equivalent representation (\ref{two2}) in the form of the
multiple integral (\ref{2onee}) of $\mathrm{M}(\boldsymbol{\upsilon})=T(%
\boldsymbol{\upsilon })\otimes\mathrm{P}_{\emptyset}$, one may write $\|%
\mathrm{T}\|_p\leq\|\mathrm{M}\|^s_{q,\infty}(r)\leq\|T\|_\frac{1}{s}(r)$,
where
\begin{equation*}
\Vert\mathrm{M}\Vert_{q,t}^{s}(r)=\int_{\mathcal{X}^{t}}\left( \int_{%
\mathcal{X}^{t}}\int_{\mathcal{X}^{t}}\mathrm{ess}\sup_{\upsilon_{\circ
}^{\circ}\in\mathcal{X}^{t}}(s(\upsilon_{\circ}^{\circ})\Vert\mathrm{M}(%
\boldsymbol{\upsilon})\Vert_{q})^{2}r(\upsilon_{+}^{\circ}\sqcup
\upsilon_{\circ}^{-})\mathrm{d}\upsilon_{+}^{\circ}\mathrm{d}\upsilon_{\circ
}^{-}\right) ^{1/2} \mathrm{d}\upsilon^-_+ ,
\end{equation*}
taking into account the fact that $\Vert\mathrm{P}_{\emptyset}\Vert_{q}=1$.
This gives a more precise estimate, with $\Vert\mathrm{T}\Vert _{p}\leq\Vert
T\Vert_{\frac{1}{s}}(r)$ holding for $p\geq
r^{-1}+s^{-1}=\lim_{q_{0}\searrow0}(r^{-1}+q_{0}+s^{-1})$. From this
estimate the previous one simply follows as
\begin{align*}
\Vert\sum_{\upsilon_{\circ}^{\circ}\subseteq\vartheta_{\circ}^{\circ}}M(%
\boldsymbol{\upsilon})\otimes\mathrm{Q}^{\otimes}(\vartheta_{\circ}^{\circ}%
\boldsymbol{\setminus}\upsilon_{\circ}^{\circ })\Vert &
\leq\sum_{\upsilon_{\circ}^{\circ}\subseteq\vartheta_{\circ}^{\circ}}q(%
\vartheta^\circ_\circ\setminus\upsilon^\circ_\circ)\Vert M(\boldsymbol{%
\upsilon})\Vert \\
& \leq (q+s^{-1})(\vartheta_{\circ}^{\circ })\mathrm{ess}\sup_{\upsilon_{%
\circ}^{\circ}\in\mathcal{X}}\{s(\upsilon_{\circ}^{\circ})\Vert M(%
\boldsymbol{\upsilon })\Vert\,
\end{align*}
where $s(\upsilon_{\circ}^{\circ})=\prod_{x\in\upsilon_{\circ}^{\circ}}s(x),%
\;\;\;q(\vartheta^\circ_\circ\setminus\upsilon_{\circ}^{\circ})=
\prod_{x\in\vartheta^\circ_\circ\setminus\upsilon_{\circ}^{\circ}}q(x),$ and
\begin{equation*}
(q+s^{-1})(\vartheta_{\circ}^{\circ})=\sum_{\upsilon_{\circ}^{\circ}%
\subseteq\vartheta_{\circ}^{\circ}}s^{-1}(\upsilon_{\circ}^{\circ
})q(\vartheta^\circ_\circ\setminus\upsilon_{\circ}^{\circ})=\prod_{x\in%
\vartheta_{\circ}^{\circ}}(q(x)+s^{-1}(x)),
\end{equation*}
and consequently $\Vert T\Vert_{p}(r)\leq\Vert M\Vert_{\infty}^{s}(r)$ for $%
p\geq q+1/s$. Hence in particular there follows the existence of the adjoint
operator $\mathrm{T}^{\ast}$ bounded in norm $\Vert\mathrm{T}^{\ast
}\Vert_{p}\leq\Vert T^{\star}\Vert_{q}(r)=\Vert T\Vert_{q}(r)$ as the
representation
\begin{equation}
\boldsymbol{\epsilon}(T)^{\ast}=\boldsymbol{\epsilon}(T^{\star
}),\;\,T^{\star}%
\begin{pmatrix}
\vartheta_{+}^{-}, & \vartheta_{\circ}^{-} \\
\vartheta_{+}^{\circ}, & \vartheta_{\circ}^{\circ}%
\end{pmatrix}
=T%
\begin{pmatrix}
\vartheta_{+}^{-}, & \vartheta_{+}^{\circ} \\
\vartheta_{\circ}^{-}, & \vartheta_{\circ}^{\circ}%
\end{pmatrix}
^{\ast}   \label{two3}
\end{equation}
of the $\star$-adjoint kernel $T^{\star}(\boldsymbol{\vartheta})=T(%
\boldsymbol{\vartheta}^{\prime})^{\ast}$, $(\vartheta^\mu_\nu)^\prime=(%
\vartheta^{-\nu}_{-\mu})$.

\section{The Inductive $\star$-Algebra of Relatively Bounded Kernels}

In the next theorem we prove that the $\star$-map $\boldsymbol{\epsilon }%
:T\mapsto\boldsymbol{\epsilon}(T)$ is an operator representation of the $%
\star$-algebra of triangular kernels $T(\boldsymbol{\vartheta})$ satisfying
the boundedness condition
\begin{equation}
\Vert T\Vert_{\boldsymbol{\alpha}}=\underset{\boldsymbol{\vartheta}%
=(\vartheta_{\nu}^{\mu})}{\mathrm{ess}\sup}\{\Vert T(\boldsymbol{\vartheta }%
)\Vert/\prod_{\mu\leq\nu}\alpha_{\nu}^{\mu}(\vartheta_{\nu}^{\mu})\}<\infty
\label{two4}
\end{equation}
relative to the product of the quadruple $\boldsymbol{\alpha}=(\alpha_{\nu
}^{\mu})_{\nu=\circ,+}^{\mu=-,\circ}$ of positive essentially measurable
product functions $\alpha_{\nu}^{\mu}(\vartheta)=\prod_{x\in\vartheta}%
\alpha_{\nu}^{\mu}(x)$, $\vartheta\in\mathcal{X}$. These are defined by an $%
\mathit{L}^{1}$-integrable function $\alpha_{+}^{-}:\mathbb{X}\rightarrow
\mathbb{R}_{+}$, by $\mathit{L}^{2}$-integrable functions $\alpha_{+}^{\circ
},\alpha_{\circ}^{-}:\mathbb{X}\rightarrow\mathbb{R}_{+}$ with a weight $r>0$%
, $r^{-1}\in\mathfrak{p}_{0}$, and by an $\mathit{L}^{\infty}$-function $%
\alpha_{\circ}^{\circ}:\mathbb{X}\rightarrow\mathbb{R}_{+}$, essentially
bounded by unity relative to some $q\in\mathfrak{p}$:
\begin{align}
\Vert\alpha_{+}^{-}\Vert^{(1)}\;\;&=\int|\alpha^-_+(x)|\mathrm{d}x<\infty,
\notag \\
\Vert\alpha\Vert ^{(2)}(r)&=\Big(\int\alpha(x)^{2}r(x)\mathrm{d}x\Big)%
^{1/2}<\infty,\;\;\;\alpha= \alpha^-_\circ,\alpha^\circ_+  \label{two5} \\
\Vert\alpha_{\circ}^{\circ}\Vert_{q}^{(\infty)}\;\;&=\underset{x}{\mathrm{ess%
}\sup }\frac{|\alpha^\circ_\circ(x)|}{q(x)}\leq1.  \notag
\end{align}
The relative boundedness (\ref{two4}) ensures the projective boundedness of $%
T$ by the inequality $\Vert T\Vert_{q}(r)\leq$
\begin{align}
&\leq\int\mathrm{d}\vartheta_{+}^{-}(\iint\mathrm{ess}\underset{%
\vartheta_{\circ }^{\circ}}{\sup}\{\Vert T\Vert_{\boldsymbol{\alpha}%
}\prod\alpha_{\nu}^{\mu
}(\vartheta_{\nu}^{\mu})/q(\vartheta_{\circ}^{\circ})\}^{2}r(\vartheta
_{+}^{\circ}\sqcup\vartheta_{\circ}^{-})\mathrm{d}\vartheta_{+}^{\circ}%
\mathrm{d}\vartheta_{\circ}^{-})^{1/2}  \notag \\
& =\int\alpha_{+}^{-}(\vartheta)\mathrm{d}\vartheta\Big(\int\alpha_{+}^{%
\circ }(\vartheta)^{2}r(\vartheta)\mathrm{d}\vartheta\int\alpha_{\circ}^{-}(%
\vartheta)^{2}r(\vartheta)\mathrm{d}\vartheta\Big)^{1/2}\mathrm{ess}%
\sup_{\vartheta}\frac{\alpha_{\circ}^{\circ}(\vartheta)}{q(\vartheta)}\Vert
T\Vert_{\boldsymbol{\alpha}}  \label{two6} \\
& =\Vert T\Vert_{\boldsymbol{\alpha}}\exp\left\{
\int(\alpha_{+}^{-}(x)+r(x)(\alpha_{+}^{\circ}(x)^{2}+\alpha_{%
\circ}^{-}(x)^{2})/2)\mathrm{d}x\right\} ,  \notag
\end{align}
where we have taken account of the fact that $\int\alpha(\vartheta )\mathrm{d%
}\vartheta=\exp\left\{\int\alpha(x)\mathrm{d}x\right\}$ for $%
\alpha(\vartheta )=\prod_{x\in\vartheta}\alpha(x)$ and%
\begin{equation*}
\mathrm{ess}\sup_{\vartheta}\{\alpha_{\circ}^{\circ}(\vartheta)/q(\vartheta
)\}=\sup_{n}\mathrm{ess}\sup_{x\in \mathbb{X}^{n}}\prod_{i=1}^{n}\{\alpha_{%
\circ }^{\circ}(x_{i})/q(x_{i})\}=1\text{ if }\alpha_{\circ}^{\circ}\leq q.
\end{equation*}

\begin{lemma}
\label{2L 2} Suppose that the multiple quantum-stochastic integral $\mathrm{T%
}_{t}=\boldsymbol{\imath}_{0}^{t}(\mathrm{M})$ is defined by a kernel
operator-function $\mathrm{M}(\boldsymbol{\upsilon})=\boldsymbol{\epsilon}(M(%
\boldsymbol{\upsilon}))$ with values in the operators of the form \textup{(%
\ref{two2})} for $M(\boldsymbol{\upsilon},\boldsymbol{\varkappa})$ in terms
of
\begin{equation*}
T_{\boldsymbol{\upsilon}}%
\begin{pmatrix}
\varkappa_{+}^{-}, & \varkappa_{\circ}^{-} \\
\varkappa_{+}^{\circ}, & \varkappa_{\circ}^{\circ}%
\end{pmatrix}
=M%
\begin{pmatrix}
\upsilon_{+}^{-}, & \upsilon_{\circ}^{-}, & \varkappa_{+}^{-}, &
\varkappa_{\circ}^{-} \\
\upsilon_{+}^{\circ}, & \upsilon_{\circ}^{\circ}, & \varkappa_{+}^{\circ}, &
\varkappa_{\circ}^{\circ}%
\end{pmatrix}
,\upsilon_{\nu}^{\mu}\in\mathcal{X},
\end{equation*}
for \ fixed $\boldsymbol{\upsilon}$\ and $M(\boldsymbol{\upsilon }):%
\boldsymbol{\varkappa}\mapsto M\left( \boldsymbol{\upsilon}\mathbf{,}%
\boldsymbol{\varkappa}\right) $ is a kernel-valued integrand
\begin{equation*}
M(\boldsymbol{\upsilon}\mathbf{,}\boldsymbol{\varkappa}):\mathfrak{k}%
^{\otimes}(\upsilon_{\circ}^{-}\sqcup\varkappa_{\circ}^{-})\otimes \mathfrak{%
k}^{\otimes}(\upsilon_{\circ}^{\circ}\sqcup\varkappa_{\circ}^{\circ })\otimes%
\mathfrak{h}\rightarrow\mathfrak{k}^{\otimes}(\upsilon_{\circ}^{\circ}\sqcup%
\varkappa_{\circ}^{\circ})\otimes\mathfrak{k}^{\otimes}(\upsilon_{+}^{\circ}%
\sqcup\varkappa_{+}^{\circ})\otimes\mathfrak{h}.
\end{equation*}
Then $\mathrm{T}_{t}=\boldsymbol{\epsilon}(T_{t})$ for the kernel $T_{t}(%
\boldsymbol{\vartheta})=\boldsymbol{\nu}_{0}^{t}(\boldsymbol{\vartheta },M) $
given by the multiple counting integral on the kernel-integrands $M$, that
is
\begin{eqnarray}
\boldsymbol{\imath}_{0}^{t}\circ\boldsymbol{\epsilon}=\boldsymbol{\epsilon}%
\circ\boldsymbol{\nu}_{0}^{t}
\end{eqnarray}
where
\begin{equation}
\boldsymbol{\nu}_{0}^{t}(\boldsymbol{\vartheta},M)=\sum_{\boldsymbol{%
\upsilon }\subseteq\boldsymbol{\vartheta}^{t}}M(\boldsymbol{\upsilon},%
\boldsymbol{\vartheta\setminus\upsilon}),   \label{two7}
\end{equation}
with $\boldsymbol{\vartheta}^{t}=(\mathbb{X}^{t}\cap\vartheta_{\nu}^{\mu})_{%
\nu=\circ,+}^{\mu=-,\circ}$ such that the sum is taken over all possible $%
\upsilon_{\nu}^{\mu}\subseteq \mathbb{X}^{t}\cap\vartheta_{\nu}^{\mu}$ and $%
\mu=-,\circ,\nu=\circ,+$.

If $M(\boldsymbol{\upsilon})$ is relatively bounded in for each $\boldsymbol{%
\upsilon}=\left( \upsilon_{\nu}^{\mu }\right) $ such that
\begin{equation*}
\Vert M(\boldsymbol{\upsilon})\Vert_{\boldsymbol{\gamma}}\leq
c\prod_{\mu,\nu
}\beta_{\nu}^{\mu}(\upsilon_{\nu}^{\mu}),\quad\beta_{\nu}^{\mu}(\upsilon
)=\prod_{x\in\upsilon}\beta_{\nu}^{\mu}(x)
\end{equation*}
for some $c>0$ and a pair of quadruples $\boldsymbol{\beta}%
=(\beta_{\nu}^{\mu})$ , $\beta_{\nu}^{\mu}\geq0$ and $\boldsymbol{\gamma}%
=(\gamma_{\nu}^{\mu})$, $\gamma_{\nu}^{\mu}\geq0$ satisfying the
integrability conditions \textup{(\ref{two5})} for $\boldsymbol{\gamma}$,
then the kernel $T$ is relatively bounded:
\begin{equation*}
\Vert\boldsymbol{\nu}_{0}^{t}(M)\Vert_{\boldsymbol{\alpha}}\leq c
\end{equation*}
if $\alpha_{\nu}^{\mu}(x)\geq
\beta_{\nu}^{\mu}(x)1_{[0,t)}(x)+\gamma_{\nu}^{\mu}(x)$ for all $\mu$, $\nu$%
, where $1_{[0,t)}(x)=1$ if $t(x)<t$ and zero if $t(x)\geq t$. In
particular, the generalized single integral $\boldsymbol{i}_{0}^{t}(\mathbf{D%
})$ of the triangular operator-integrand $\mathbf{D}\left( x\right) =\left[
\mathrm{D}_{\nu}^{\mu}\left( x\right) \right] $, with $\mathrm{D}%
_{\nu}^{\mu}(x)=\boldsymbol{\epsilon}(D_{\nu}^{\mu}(x))$, is a
representation
\begin{equation*}
\boldsymbol{i}_{0}^{t}\circ\boldsymbol{\epsilon}=\boldsymbol{\epsilon}\circ%
\boldsymbol{n}_{0}^{t}
\end{equation*}
of the single counting integral
\begin{equation*}
\boldsymbol{n}_{0}^{t}(\boldsymbol{\vartheta},D)=\sum_{\boldsymbol{x}\in%
\boldsymbol{\vartheta}^{t}}M(\mathbf{x},\boldsymbol{\vartheta\setminus x}%
),\quad M(\mathbf{x}_{\nu}^{\mu},\boldsymbol{\varkappa})=D_{\nu}^{\mu }(x,%
\boldsymbol{\varkappa}),
\end{equation*}
of the triangular kernel-integrand $D\left( x,\boldsymbol{\varkappa }\right)
=\left[ D_{\nu}^{\mu}\left( x,\boldsymbol{\varkappa}\right) \right] $, where
the sum is taken over all possible $x\in\vartheta_{\nu}^{\mu}\cap \mathbb{X}%
^{t}$ for $\mu=-,\circ$ and $\nu=\circ,+$, and $\boldsymbol{x}=\boldsymbol{%
\upsilon}_{\nu}^{\mu}(x)$ is one of the atomic tables
\begin{equation}
\boldsymbol{x}_{+}^{-}=%
\begin{pmatrix}
x & \emptyset \\
\emptyset & \emptyset%
\end{pmatrix}
,\;\boldsymbol{x}_{+}^{\circ}=%
\begin{pmatrix}
\emptyset & \emptyset \\
x & \emptyset%
\end{pmatrix}
,\;\boldsymbol{x}_{\circ}^{-}=%
\begin{pmatrix}
\emptyset & x \\
\emptyset & \emptyset%
\end{pmatrix}
,\;\boldsymbol{x}_{\circ}^{\circ}=%
\begin{pmatrix}
\emptyset & \emptyset \\
\emptyset & x%
\end{pmatrix}
,   \label{2oneg}
\end{equation}
with indices $\mu(x)=\kappa$, $\nu(x)=\lambda$ defined almost everywhere by
the condition $x\in\upsilon_{\lambda}^{\kappa}$.
\end{lemma}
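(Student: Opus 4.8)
The plan is to establish the operator identity $\boldsymbol{\imath}_{0}^{t}\circ\boldsymbol{\epsilon}=\boldsymbol{\epsilon}\circ\boldsymbol{\nu}_{0}^{t}$ by direct substitution and reorganization of the chain integrals, then to read off the relative-boundedness estimate from the multiplicativity of the product weights, and finally to obtain the single-integral case as a specialization to one-point integrands.

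First I would substitute the kernel representation $\mathrm{M}(\boldsymbol{\upsilon})=\boldsymbol{\epsilon}(M(\boldsymbol{\upsilon}))$ into the multiple-integral formula (\ref{2onee}). Using the definition (\ref{two2}) of $\boldsymbol{\epsilon}$ together with the prescribed action on the shifted vector $\mathring{\chi}(\upsilon,\vartheta)=\chi(\upsilon\sqcup\vartheta)$, each slice $M(\boldsymbol{\upsilon})$ acts as a kernel in the remaining variables $\boldsymbol{\varkappa}$ through the identification $T_{\boldsymbol{\upsilon}}(\boldsymbol{\varkappa})=M(\boldsymbol{\upsilon},\boldsymbol{\varkappa})$. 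This produces a nested expression carrying two kinds of chain variables: the continuous integration chains $\upsilon_{\nu}^{\mu}$ coming from $\boldsymbol{\imath}_{0}^{t}$ (confined to $\mathbb{X}^{t}$), and the decomposition chains $\varkappa_{\nu}^{\mu}$ arising from the inner $\boldsymbol{\epsilon}$.

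The decisive step is to merge these two families. For each index pair $(\mu,\nu)$ I would apply the Guichardet sum-integral lemma
\begin{equation*}
\int_\mathcal{X} \sum_{\upsilon \subseteq \vartheta} f(\upsilon, \vartheta\setminus\upsilon)\,\mathrm{d}\vartheta = \int_\mathcal{X}\int_\mathcal{X} f(\upsilon,\varkappa)\,\mathrm{d}\upsilon\,\mathrm{d}\varkappa,
\end{equation*}
which lets one trade a subset-sum over a chain for an independent double integral over two chains. Reading this identity in reverse on all off-diagonal index blocks simultaneously recombines each $\upsilon_{\nu}^{\mu}$ with the corresponding $\varkappa_{\nu}^{\mu}$ into a single chain $\vartheta_{\nu}^{\mu}$ and collapses the outer integrations of $\boldsymbol{\imath}_{0}^{t}$ into the single integration displayed in (\ref{two2}). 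The resulting operator acting on $\chi$ is exactly $\boldsymbol{\epsilon}(T_{t})$ with $T_{t}(\boldsymbol{\vartheta})=\sum_{\boldsymbol{\upsilon}\subseteq\boldsymbol{\vartheta}^{t}}M(\boldsymbol{\upsilon},\boldsymbol{\vartheta}\setminus\boldsymbol{\upsilon})=\boldsymbol{\nu}_{0}^{t}(\boldsymbol{\vartheta},M)$, the restriction $\boldsymbol{\vartheta}^{t}$ recording the cutoff of the integrated variables at $t$. I expect the main obstacle to be purely organizational: keeping the matrix-chain bookkeeping straight so that each $\upsilon_{\nu}^{\mu}$ is paired with the correct $\varkappa_{\nu}^{\mu}$ across all four blocks at once, and checking that no spurious ordering constraints are introduced when the chains merge.

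For the boundedness estimate I would insert the hypothesis $\|M(\boldsymbol{\upsilon})\|_{\boldsymbol{\gamma}}\leq c\prod_{\mu,\nu}\beta_{\nu}^{\mu}(\upsilon_{\nu}^{\mu})$ into $T_{t}=\boldsymbol{\nu}_{0}^{t}(M)$ and bound $\|T_{t}(\boldsymbol{\vartheta})\|$ by the triangle inequality over the subset-sum. Using the multiplicativity of the product weights and the elementary identity $\sum_{\upsilon\subseteq\vartheta}\beta(\upsilon)\gamma(\vartheta\setminus\upsilon)=\prod_{x\in\vartheta}(\beta(x)+\gamma(x))$ componentwise (the same identity already used for $(q+s^{-1})(\vartheta_{\circ}^{\circ})$ in (\ref{two6})), the bound collapses to $c\prod_{\mu,\nu}(\beta_{\nu}^{\mu}1_{[0,t)}+\gamma_{\nu}^{\mu})(\vartheta_{\nu}^{\mu})$, whence the relative $\boldsymbol{\alpha}$-norm is at most $c$ precisely under the stated pointwise domination $\alpha_{\nu}^{\mu}(x)\geq\beta_{\nu}^{\mu}(x)1_{[0,t)}(x)+\gamma_{\nu}^{\mu}(x)$, the indicator entering because only the integrated chains are truncated at $t$. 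Finally, the single-integral assertion follows by specializing $\boldsymbol{\nu}_{0}^{t}$ to integrands supported on the one-point tables (\ref{2oneg}): restricting the subset-sum $\boldsymbol{\upsilon}\subseteq\boldsymbol{\vartheta}^{t}$ to singletons $\boldsymbol{x}=\boldsymbol{x}_{\nu}^{\mu}(x)$ turns $\boldsymbol{\nu}_{0}^{t}$ into the single counting sum $\boldsymbol{n}_{0}^{t}(\boldsymbol{\vartheta},D)=\sum_{\boldsymbol{x}\in\boldsymbol{\vartheta}^{t}}M(\mathbf{x},\boldsymbol{\vartheta}\setminus\mathbf{x})$, yielding $\boldsymbol{i}_{0}^{t}\circ\boldsymbol{\epsilon}=\boldsymbol{\epsilon}\circ\boldsymbol{n}_{0}^{t}$ as the degree-one instance of the identity already proved.
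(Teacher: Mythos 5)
Your proposal is correct and follows essentially the same route as the paper's proof: both rest on the Guichardet sum--integral identity (\ref{2oned}) to trade the subset-sums over chains for iterated chain integrals when matching $\boldsymbol{\imath}_{0}^{t}\circ\boldsymbol{\epsilon}$ with $\boldsymbol{\epsilon}\circ\boldsymbol{\nu}_{0}^{t}$, on the same product-weight identity $\sum_{\upsilon\subseteq\vartheta}\beta(\upsilon)\gamma(\vartheta\setminus\upsilon)=\prod_{x\in\vartheta}\bigl(\beta(x)+\gamma(x)\bigr)$ (with the $1_{[0,t)}$ truncation) for the $\boldsymbol{\alpha}$-bound, and on the same restriction to the one-point tables (\ref{2oneg}) for the single-integral case. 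The only difference is the direction of the main computation --- you unfold $\boldsymbol{\imath}_{0}^{t}(\boldsymbol{\epsilon}(M))$ into $\boldsymbol{\epsilon}(\boldsymbol{\nu}_{0}^{t}(M))$, whereas the paper starts from $\boldsymbol{\epsilon}(T_{t})$ with $T_{t}=\boldsymbol{\nu}_{0}^{t}(M)$ and rearranges it into the multiple integral --- which is the same argument read in reverse.
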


\begin{proof}
If $M(\boldsymbol{\upsilon},\boldsymbol{\varkappa})$ is an operator-valued
integrand-kernel that is bounded relative to the pair $(\boldsymbol{\beta}%
\mathbf{,}\boldsymbol{\gamma})$ such that $\Vert M\Vert_{\boldsymbol{\beta}%
\mathbf{,}\boldsymbol{\gamma}}\leq c$, then the relatively bounded operator $%
\mathrm{T}_{t}=\boldsymbol{\epsilon}(T_{t})$ is well-defined for $T_{t}=%
\boldsymbol{\nu}_{0}^{t}(M)$, since
\begin{align*}
\Vert T_{t}(\boldsymbol{\vartheta})\Vert &
\leq\sum_{\upsilon_{+}^{-}\subseteq\vartheta_{+}^{-}}^{t(%
\upsilon_{+}^{-})<t}\sum_{\upsilon
_{+}^{\circ}\subseteq\vartheta_{+}^{\circ}}^{t(\upsilon_{+}^{\circ})<t}%
\sum_{\upsilon_{\circ}^{-}\subseteq\vartheta_{\circ}^{-}}^{t(\upsilon_{\circ
}^{-})<t}\sum_{\upsilon_{\circ}^{\circ}\subseteq\vartheta_{\circ}^{%
\circ}}^{t(\upsilon_{\circ}^{\circ})<t}\Vert M(\boldsymbol{\upsilon},%
\boldsymbol{\vartheta\setminus\upsilon})\Vert \\
& \leq
c\prod_{\nu=\circ,+}^{\mu=-,\circ}\sum_{\upsilon_{\nu}^{\mu}\subseteq%
\vartheta_{\nu}^{\mu}}^{t(\upsilon_{\nu}^{\mu})<t}\beta_{\nu}^{\mu
}(\upsilon_{\nu}^{\mu})\gamma_{\nu}^{\mu}(\vartheta_{\nu}^{\mu}\boldsymbol{%
\setminus}\upsilon_{\nu}^{\mu})=c\prod_{\nu=\circ,+}^{\mu=-,\circ}\alpha_{%
\nu}^{\mu}(\vartheta_{\nu}^{\mu}),
\end{align*}
where $\alpha_{\nu}^{\mu}(\vartheta)=\prod_{x\in\vartheta}^{t(x)<t}[\beta
_{\nu}^{\mu}(x)+\gamma_{\nu}^{\mu}(x)]\cdot\prod_{x\in\vartheta}^{t(x)\geq
t}\gamma_{\nu}^{\mu}(x)$ for $\beta_{\nu}^{\mu}(\upsilon)=\prod_{x\in%
\upsilon }\beta_{\nu}^{\mu}(x)$ and $\gamma_{\nu}^{\mu}(\varkappa)=\prod_{x%
\in \varkappa}\gamma_{\nu}^{\mu}(x)$. Applying the representation (\ref{two2}%
) to $T_{t}(\boldsymbol{\vartheta})=\boldsymbol{\nu}_{0}^{t}(\boldsymbol{%
\vartheta },M)$ it is easy to obtain the representation of the operator $%
\boldsymbol{\epsilon}(T_{t})$ in the form of the generalized multiple
integral (\ref{2onee}) of $\mathrm{M}(\boldsymbol{\upsilon})=\boldsymbol{%
\epsilon}(M(\boldsymbol{\upsilon}))$. Indeed, $[\mathrm{T}%
_{t}\chi](\vartheta)=$
\begin{align*}
&
=\sum_{\vartheta_{\circ}^{\circ}\sqcup\vartheta_{+}^{\circ}=\vartheta}\iint%
\sum_{\boldsymbol{\upsilon}\subseteq\boldsymbol{\vartheta}^{t}}M(\boldsymbol{%
\upsilon},\boldsymbol{\vartheta\setminus\upsilon})\chi(\vartheta_{\circ}^{%
\circ}\sqcup\vartheta_{\circ}^{-})\mathrm{d}\vartheta_{\circ}^{-}\mathrm{d}%
\vartheta_{+}^{-} \\
& =\sum_{\upsilon_{\circ}^{\circ}\sqcup\upsilon_{+}^{\circ}\subseteq
\vartheta^{t}}\underset{\mathcal{X}^t\times\mathcal{X}^t}{\iint}\mathrm{d}%
\upsilon_{\circ}^{-}\mathrm{d}\upsilon_{+}^{-}\sum_{\varkappa_{\circ}^{%
\circ}\sqcup\varkappa_{+}^{\circ}=\vartheta_{-}^{\circ}}\iint M(\boldsymbol{%
\upsilon},\boldsymbol{\varkappa})\mathring{\chi}(\upsilon
_{\circ}^{\circ}\sqcup\upsilon_{\circ}^{-},\varkappa_{\circ}^{\circ}\sqcup%
\varkappa_{\circ}^{-})\mathrm{d}\varkappa_{\circ}^{-}\mathrm{d}%
\varkappa_{+}^{-},
\end{align*}
where $\vartheta_{-}^{\circ}=\vartheta\boldsymbol{\setminus}%
(\upsilon_{\circ}^{\circ}\sqcup\upsilon_{+}^{\circ}),\,\mathring{\chi }%
(\upsilon,\varkappa)=\chi(\varkappa\sqcup\upsilon)$. Consequently, $\mathrm{T%
}_{t}=\boldsymbol{\imath}_{0}^{t}(\mathrm{M})$, where
\begin{equation*}
\lbrack\mathrm{M}(\boldsymbol{\upsilon})\mathring{\chi}(\upsilon_{\circ
}^{\circ}\sqcup\upsilon_{\circ}^{-})](\vartheta)=\sum_{\varkappa_{\circ
}^{\circ}\sqcup\varkappa_{+}^{\circ}=\vartheta}\iint M(\boldsymbol{\upsilon }%
,\boldsymbol{\varkappa})\mathring{\chi}(\upsilon_{\circ}^{\circ}\sqcup%
\upsilon_{\circ}^{-},\,\varkappa_{\circ}^{\circ}\sqcup\varkappa_{\circ }^{-})%
\mathrm{d}\varkappa_{\circ}^{-}\mathrm{d}\varkappa_{+}^{-},
\end{equation*}
that is, we have proved that $\boldsymbol{\epsilon}\circ\boldsymbol{\nu}%
_{0}^{t}=\boldsymbol{\imath}_{0}^{t}\circ\boldsymbol{\epsilon}$.

In particular, if $M(\boldsymbol{\upsilon},\boldsymbol{\varkappa})=0$ for $%
\sum|\upsilon_{\nu}^{\mu}|\neq1$, then, obviously
\begin{equation*}
\boldsymbol{\nu}_{0}^{t}(\boldsymbol{\vartheta},M)=\boldsymbol{n}_{0}^{t}(%
\boldsymbol{\vartheta},D),\quad\boldsymbol{\imath}_{0}^{t}(\mathrm{M})=%
\boldsymbol{i}_{0}^{t}(\mathbf{D}),
\end{equation*}
where $M_{\nu}^{\mu}(x,\boldsymbol{\varkappa})=M(\mathbf{x}_{\nu}^{\mu },%
\boldsymbol{\varkappa})$ and $\mathrm{M}(\boldsymbol{\upsilon})=0$ for $%
\sum|\upsilon_{\nu}^{\mu}|\neq1$, $\mathrm{D}_{\nu}^{\mu}(x)=\mathrm{M}(%
\mathbf{x}_{\nu}^{\mu})$. This yields the representation $\boldsymbol{%
\epsilon}\circ\boldsymbol{n}_{0}^{t}=\boldsymbol{i}_{0}^{t}\circ\boldsymbol{%
\epsilon}$ for the single generalized non-adapted integral $\boldsymbol{i}%
^t_0(\mathbf{D})=\int_{\mathbb{X}^t}\Lambda(\mathbf{D},\mathrm{d}x)$, $%
\Lambda(\mathbf{D},\bigtriangleup)=\Lambda^\nu_\mu(\mathrm{D}%
^\mu_\nu,\triangle)$ for $\bigtriangleup=\mathbb{X}^{t}$, in the form of the
sum
\begin{equation*}
\sum_{\mu,\nu}\Lambda_{\mu}^{\nu}(\boldsymbol{\epsilon}(D_{\nu}^{\mu
}),\bigtriangleup)=\boldsymbol{\epsilon}\Big(\sum_{\mu,\nu}N_{\mu}^{\nu
}(D_{\nu}^{\mu},\bigtriangleup)\Big),\quad N_{\mu}^{\nu}(\boldsymbol{%
\vartheta },D,\bigtriangleup)=\sum_{x\in\vartheta_{\nu}^{\mu}\cap%
\bigtriangleup }D(x,\boldsymbol{\vartheta\setminus x}_{\nu}^{\mu})
\end{equation*}
of representations of four kernel measures $N_{\nu}^{\mu}(\boldsymbol{%
\vartheta},D_{\nu}^{\mu},\bigtriangleup)$ that define kernel representations
$\boldsymbol{\epsilon}\circ N(\bigtriangleup)=\Lambda (\bigtriangleup)\circ%
\boldsymbol{\epsilon}$ of the canonical measures $\Lambda(\mathbf{D}%
,\bigtriangleup)$ with $\mathrm{D}_{\nu}^{\mu}(x)=\boldsymbol{\epsilon}%
(D_{\nu }^{\mu}(x))$.
\end{proof}

One may now realize the commutative diagram of the quantum stochastic
calculus (Fig. 1.).

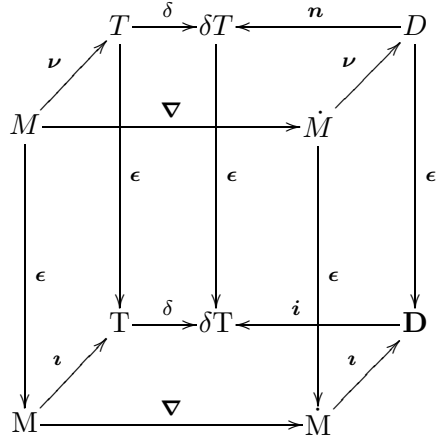
\begin{figure}
\xymatrix@1{ & & & & & T\ar[ddd]^{\boldsymbol{\epsilon}}\ar[r]^\delta& \delta T\ar[ddd]^{\boldsymbol{\epsilon}} & & D\ar[ll]_{\boldsymbol{n}}\ar[ddd]^{\boldsymbol{\epsilon}}\\
& & & & M\ar[ur]^{\boldsymbol{\nu}}\ar[rrr]^{\boldsymbol{\nabla}}\ar[ddd]^{\boldsymbol{\epsilon}}& & & \dot{M}\ar[ur]^{\boldsymbol{\nu}}\ar[ddd]^{\boldsymbol{\epsilon}}\\
 & & & & & & & \\
 & & & & & \mathrm{T}\ar[r]^\delta & \delta\mathrm{T} & & \mathbf{D}\ar[ll]_(0.6){\boldsymbol{i}} \\
 & & & & \mathrm{M}\ar[rrr]^{\boldsymbol{\nabla}}\ar[ur]^{\boldsymbol{\imath}}& & & \dot{\mathrm{M}}\ar[ur]^{\boldsymbol{\imath}}}
\caption{QS calculus commutative diagram. Here $\delta$ has been used to denote the variation $\delta\mathrm{T}=\mathrm{T}_t-\mathrm{T}_0$ of the operator $\mathrm{T}$, and $\nabla$ is the single point split operator \cite{BelB10,Be90f}.}
\end{figure}

\section{Formulation of The Generalized It\^{o} Formula}

In the following theorem, which generalizes the It\^{o} formula to
non-commutative and non-adapted quantum stochastic processes $\mathrm{T}_{t}=%
\boldsymbol{\epsilon}\left( T_{t}\right) $ given by an operator-valued
kernel $T_{t}\left( \boldsymbol{\vartheta}\right) $, we use the following
triangular-matrix notation
\begin{equation*}
\mathbf{T}\left( x\right) =\left[ \mathrm{T}\left( \mathbf{x}_{\nu}^{\mu
}\right) \right] ,\;\;\mathrm{T}\left( \mathbf{x}\right) =\nabla _{\mathbf{x}%
}\left( \mathrm{T}_{t}\right) |_{t=t\left( x\right) }
\end{equation*}
for the \emph{QS point derivative} $\nabla_{\mathbf{x}}\left( \mathrm{T}%
\right) =\boldsymbol{\epsilon}\left( \dot{T}\left( \mathbf{x}\right) \right)
$ given by the QS point split \cite{BelB10,Be90f} of the kernel $\dot{T}%
\left( \mathbf{x},\boldsymbol{\varkappa}\right) =T\left( \boldsymbol{%
\varkappa }\sqcup\boldsymbol{x}\right) $, with $\mathrm{T}_{\nu}^{\mu}\left(
x\right) =\mathrm{T}\left( \mathbf{x}_{\nu}^{\mu}\right) $ equal to zero for
$\mu>\nu$ and $\mathrm{T}_{-}^{-}\left( x\right) =\mathrm{T}_{t\left(
x\right) }=\mathrm{T}_{+}^{+}\left( x\right) $.

We notice that if $T_{t}\left( \boldsymbol{\vartheta}\right) =T_{0}\left(
\boldsymbol{\vartheta}\right) +\boldsymbol{n}_{0}^{t}\left( \boldsymbol{%
\vartheta},D\right) $, corresponding to the single-integral representation $%
\mathrm{T}_{t}-\mathrm{T}_{0}=\boldsymbol{i}_{0}^{t}\left( \mathbf{D}\right)
$\ with $\mathrm{D}_{\nu}^{\mu}\left( x\right) =\boldsymbol{\epsilon}\left(
D_{\nu }^{\mu}\left( x\right) \right) $, then $\dot{T}_{t}(\mathbf{x},%
\boldsymbol{\varkappa})=T_{t}(\boldsymbol{\varkappa}\sqcup\boldsymbol{x})$
may be given by%
\begin{equation*}
\dot{T}_{t}(\mathbf{x},\boldsymbol{\varkappa})=\dot{T}_{t\wedge t_{+}\left(
x\right) }\left( \mathbf{x},\boldsymbol{\varkappa}\right) +\sum _{%
\boldsymbol{z}\in\boldsymbol{\varkappa}}^{t_{+}\left( x\right) \leq t\left(
z\right) <t}D\left( \mathbf{z},\left( \boldsymbol{\varkappa} \setminus
\boldsymbol{z}\right) \sqcup\boldsymbol{x}\right) .
\end{equation*}
Indeed, if $t\in(t\left( x\right) ,t_{+}(x)]$, then $\dot{T}_{t}(\mathbf{x},%
\boldsymbol{\varkappa})=\dot{T}_{t_{+}\left( x\right) }(\mathbf{x},%
\boldsymbol{\varkappa})$ and is therefore not dependent on $t$, where
\begin{equation*}
t_{+}\left( x\right) =\min\left\{ t\left( x^{\prime}\right) >t\left(
x\right) :x^{\prime}\in\sqcup\varkappa_{\nu}^{\mu}\right\} ,
\end{equation*}
and therefore the point-wise right limit%
\begin{equation*}
\dot{T}_{t_{+}\left( x\right) }(\mathbf{x},\boldsymbol{\varkappa}%
):=\lim_{t\searrow t\left( x\right) }\dot{T}_{t}(\mathbf{x},\boldsymbol{%
\varkappa})=\dot{T}_{t\left( x\right) }\left( \mathbf{x},\boldsymbol{%
\varkappa}\right) +D\left( \mathbf{x},\boldsymbol{\varkappa }\right)
\end{equation*}
trivially exists in any uniform topology for each $\mathbf{x}\in\left\{
\mathbf{x}_{\nu}^{\mu}:\mu,\nu=-,\circ,+\right\} $ and $\boldsymbol{%
\varkappa }$; with%
\begin{equation*}
\dot{T}_{t_+\left( x\right) }(\mathbf{x}_{-}^{-},\boldsymbol{\varkappa }%
)=T_{t\left( x\right) }(\boldsymbol{\varkappa})=\dot{T}_{t_+\left( x\right)
}(\mathbf{x}_{+}^{+},\boldsymbol{\varkappa})
\end{equation*}
for $\dot{T}_{t}(\mathbf{x}_{-}^{-},\boldsymbol{\varkappa})=T_{t}(%
\boldsymbol{\varkappa})=\dot{T}_{t}(\mathbf{x}_{+}^{+},\boldsymbol{\varkappa
})$ due to the independentcy of $T\left( \vartheta_{\cdot}^{\cdot}\right) $
on $\vartheta_{-}^{-}$ and $\vartheta_{+}^{+}$.

Now we assume that the QS point split $\nabla_{\mathbf{x}}\left( \mathrm{T}%
_{t}\right) =\boldsymbol{\epsilon }\left( \dot{T}_{t}\left( \mathbf{x}%
\right) \right) $, for $t>t(x)$, has the right limit
\begin{equation*}
\mathrm{T}_+\left( \mathbf{x}\right) :=\underset{t\searrow t\left( x\right)}{%
\lim}\nabla_{\mathbf{x}}\left( \mathrm{T}_{t}\right) \equiv\boldsymbol{%
\epsilon}\left(\dot{T}_{t_+(x)}(\mathbf{x})\right)
\end{equation*}
under the continuity of $\boldsymbol{\epsilon}$ in an appropriate topology
on $\mathrm{T}_t$, as they have trivially the limits
\begin{equation*}
\mathrm{T}_+(\mathbf{x}^-_-)=\boldsymbol{\epsilon}\left( \dot{T}_{t\left(
x\right) }\left( \mathbf{x}_{-}^{-}\right) \right) =\mathrm{T}_{t\left(
x\right) }=\boldsymbol{\epsilon}\left( \dot{T}_{t\left( x\right) }\left(
\mathbf{x}_{+}^{+}\right) \right)=\mathrm{T}_+(\mathbf{x}^+_+)
\end{equation*}
for $\mathbf{x}\in\left\{ \mathbf{x}_{-}^{-},\mathbf{x}_{+}^{+}\right\} $.
The operator-valued triangular matrix function $\mathbf{T}_+(x)=\left[%
\mathrm{T}_+(\mathbf{x}^\mu_\nu)\right]$ of these limits is called the \emph{%
QS germ} of the process $\mathrm{T}$ as an operator-valued function $t\mapsto%
\mathrm{T}_t$. As it is proved in the main theorem, these germ-limits are
given as $\mathrm{T}_+\left( \mathbf{x}\right) =\mathrm{T}\left( \mathbf{x}%
\right) +\mathrm{D}\left( \mathbf{x}\right) $ by the matrix elements $%
\mathrm{D}\left( \mathbf{x}_{\nu}^{\mu}\right) $ of the QS-derivatives $%
\mathbf{D}=\left[ \mathrm{D}\left( \mathbf{x}_{\nu}^{\mu}\right) \right] $
and when the kernels $T_{t}$ are given as the multiple counting integrals $%
T_{t}=\boldsymbol{\nu}_{0}^{t}(M)$ \textup{(\ref{two7})} we obtain the
multiple QS integral representation $\mathrm{T}_{t}=\boldsymbol{\imath}%
_{0}^{t}(\mathrm{M})$ \cite{BelB10} with $\mathrm{M}(\boldsymbol{\upsilon})=%
\boldsymbol{\epsilon}(M(\boldsymbol{\upsilon}))$ defining the matrix
elements $\mathrm{D}_{\nu }^{\mu}\left( x\right) $ as%
\begin{equation*}
\mathrm{D}\left( \mathbf{x}_{\nu}^{\mu}\right) =\boldsymbol{\imath}%
_{0}^{t\left( x\right) }\left( \mathrm{\dot{M}}\left( \mathbf{x}_{\nu
}^{\mu}\right) \right) =\boldsymbol{\epsilon}\left( D\left( \mathbf{x}%
_{\nu}^{\mu}\right) \right)
\end{equation*}
for $x\in \mathbb{X}$ from each atomic table $\boldsymbol{x}_{\nu}^{\mu}\ni x
$ in (\ref{2oneg}).

In the following Main Theorem taken from \cite{Be91},\cite{Be92c} regarding
the It\^{o} product formula in terms of the kernels $T$ we shall adopt the
convention that given a chain $\varkappa =\varkappa _{+}^{-}\sqcup \varkappa
_{\circ }^{-}\sqcup \varkappa _{\circ }^{\circ }\sqcup \varkappa _{+}^{\circ
}$, we have sub-chains $\varkappa ^{\mu }:=\varkappa _{\circ }^{\mu }\sqcup
\varkappa _{+}^{\mu }$, and $\varkappa _{\nu }:=\varkappa _{\nu }^{-}\sqcup
\varkappa _{\nu }^{\circ }$.

\begin{definition}
Let the kernels $X,Y$ be given by the continuous operators $X(\boldsymbol{%
\sigma}):\mathfrak{k}^\otimes(\sigma_\circ)\otimes\mathfrak{h}\rightarrow
\mathfrak{k}^\otimes(\sigma^\circ)\otimes\mathfrak{h}$ and $Y(\boldsymbol{%
\tau}):\mathfrak{k}^\otimes(\tau_\circ)\otimes\mathfrak{h}\rightarrow
\mathfrak{k}^\otimes(\tau^\circ)\otimes\mathfrak{h}$, then the associative
product $X\cdot Y$ \cite{Be90f} is given by
\begin{equation}
[X\cdot Y](\boldsymbol{\vartheta})=\sum^{\sigma^\circ_+\sqcup\upsilon^%
\circ_+=\vartheta^\circ_+}
_{\upsilon^-_\circ\sqcup\tau^-_\circ=\vartheta^-_\circ,}\sum_{\sigma^-_+%
\sqcup\upsilon^-_+\sqcup\tau^-_+=\vartheta^-_+} X(\boldsymbol{\sigma})Y(%
\boldsymbol{\tau}) ,  \label{3onem}
\end{equation}
as a mapping of $\mathfrak{k}^\otimes(\tau_\circ)\otimes\mathfrak{h}$ into $%
\mathfrak{k}^\otimes(\sigma^\circ)\otimes\mathfrak{h}$, where $%
\upsilon^\circ_\circ=\vartheta^\circ_\circ$, $\vartheta^\circ=\sigma^\circ$,
and $\tau_\circ=\vartheta_\circ$, and
\begin{equation*}
\boldsymbol{\sigma}=\left(
\begin{array}{cc}
\sigma^-_+ & \upsilon^- \\
\sigma^\circ_+ & \upsilon^\circ
\end{array}
\right),\quad \boldsymbol{\tau}=\left(
\begin{array}{cc}
\tau^-_+ & \tau_\circ^- \\
\upsilon_+ & \upsilon_\circ
\end{array}
\right),
\end{equation*}
with unit kernel $\hat{I}$ given by operators $\hat{I}(\boldsymbol{\vartheta}%
)=\hat{\mathrm{I}}(\vartheta^\circ_\circ)\otimes\delta_
\emptyset(\vartheta\setminus\vartheta^\circ_\circ)$ such that $\hat{I}\cdot
X=X=X\cdot \hat{I}$.
\end{definition}

\begin{maintheorem}
\label{2T 2} $(\mathrm{i})\quad$ If kernel $T(\boldsymbol{\vartheta})$ is
relatively bounded, then the same is true for the kernel $T^{\star}(%
\boldsymbol{\vartheta }):\Vert T^{\star}\Vert_{\boldsymbol{\gamma}}=\Vert
T\Vert_{\boldsymbol{\gamma }^{\prime}}$, where
\begin{equation*}
\boldsymbol{\gamma}=%
\begin{pmatrix}
\gamma_{+}^{-} & \gamma_{\circ}^{-} \\
\gamma_{+}^{\circ} & \gamma_{\circ}^{\circ}%
\end{pmatrix}
,\;\;\boldsymbol{\gamma}^{\prime}=%
\begin{pmatrix}
\gamma_{+}^{-} & \gamma_{+}^{\circ} \\
\gamma_{\circ}^{-} & \gamma_{\circ}^{\circ}%
\end{pmatrix}
,
\end{equation*}
and the operator $\mathrm{T}^{\ast}=\boldsymbol{\epsilon}(T^{\star})$, as
well as the operator $\mathrm{T}=\boldsymbol{\epsilon}(T)$, is $p$-bounded
by the estimate \textup{(\ref{two6})} for $p\geq q+1/r$. For any such
kernels $T(\boldsymbol{\upsilon})$ and $T^{\star}(\boldsymbol{\upsilon})$,
bounded relative to the quadruples $\boldsymbol{\alpha}=(\alpha_{\nu}^{\mu})$
and $\boldsymbol{\gamma}=(\gamma_{\nu}^{\mu})$ of functions $%
\alpha_{\nu}^{\mu }(x)$ and$\,\gamma_{\nu}^{\mu}(x)$ satisfying \textup{(\ref%
{two5}),} the operator
\begin{equation*}
\boldsymbol{\epsilon}(T)\boldsymbol{\epsilon}(T)^{\ast}=\boldsymbol{\epsilon}%
(T\cdot T^{\star}),\quad\boldsymbol{\epsilon }({\hat{I}})=\mathrm{\hat{I}}%
\text{,}\;\text{where\ }{\hat{I}}=\mathbf{1}_{\mathfrak{h}}\otimes{I}%
^{\otimes }\text{,}
\end{equation*}
is well-defined as a $\star$-representation of kernel product \textup{(\ref%
{3onem})} having the estimate $\Vert T\cdot T^{\star}\Vert_{\boldsymbol{\beta%
}}\leq\Vert T\Vert_{\boldsymbol{\alpha}}\Vert T^{\star}\Vert_{\boldsymbol{%
\gamma}}$ if $\beta_{\nu}^{\mu}\geq (\boldsymbol{\alpha}\mathbf{\cdot}%
\boldsymbol{\gamma})_{\nu}^{\mu}$, where $(\boldsymbol{\alpha}\mathbf{\cdot}%
\boldsymbol{\gamma})_{\nu}^{\mu}(x)=\sum\alpha_{\kappa}^{\mu}(x)\gamma_{%
\nu}^{\kappa}(x)$ is defined by the product of triangular matrices
\begin{equation*}
\left[
\begin{array}{ccc}
1 & \alpha_{\circ}^{-} & \alpha_{+}^{-} \\
0 & \alpha_{\circ}^{\circ} & \alpha_{+}^{\circ} \\
0 & 0 & 1%
\end{array}
\right] \left[
\begin{array}{ccc}
1 & \gamma_{\circ}^{-} & \gamma_{+}^{-} \\
0 & \gamma_{\circ}^{\circ} & \gamma_{+}^{\circ} \\
0 & 0 & 1%
\end{array}
\right] =\left[
\begin{array}{ccc}
1 & \alpha_{\circ}^{-}\gamma_{\circ}^{\circ}+\gamma_{\circ}^{-}, & \gamma
_{+}^{-}+\alpha_{\circ}^{-}\gamma_{+}^{\circ}+\alpha_{+}^{-} \\
0, & \alpha_{\circ}^{\circ}\gamma_{\circ}^{\circ}, & \alpha_{\circ}^{\circ
}\gamma_{+}^{\circ}+\gamma_{+}^{\circ} \\
0, & 0, & 1%
\end{array}
\right] .
\end{equation*}
$(\mathrm{ii})\quad$ Let $\mathrm{T}_{t}=\boldsymbol{\epsilon}(T_{t})$ with $%
\dot {T}_{t}\left( \mathbf{x,}\boldsymbol{\varkappa}\right) =T_{t}\left(
\boldsymbol{x}\sqcup\boldsymbol{\varkappa}\right) $ having the QS right
limit at $t\searrow t\left( x\right) $. Let $\mathbf{T}(x)=\left[ \nabla _{%
\mathbf{x}_{\nu}^{\mu}}\left( \mathrm{T}_{t\left( x\right) }\right) \right] $
and $\mathbf{T}_+(x)=[\mathrm{T}_+(\mathbf{x}_{\nu}^{\mu})]$ denote
triangular matrices at $\mathbf{x}$ with $t=t\left( x\right) $ having the
operator-valued matrix elements
\begin{equation}
\mathrm{T}\left( \mathbf{x}_{\nu}^{\mu}\right) =\boldsymbol{\epsilon}(\dot{T}%
_{t(x)}(\mathbf{x}_{\nu}^{\mu}))\equiv\mathrm{T}_{\nu}^{\mu}\left( x\right)
,\quad\mathrm{T}_+\left( \mathbf{x}_{\nu}^{\mu}\right) =\boldsymbol{\epsilon}%
(\dot{T}_{t_{+}\left( x\right) }(\mathbf{x}_{\nu}^{\mu}))\equiv\mathrm{G}%
_{\nu}^{\mu}\left( x\right)   \label{two8}
\end{equation}
corresponding to the single point split $\dot{T}_{t}\left( \mathbf{x}_{\nu
}^{\mu}\right) $ at $x\in\mathbf{x}_{\nu}^{\mu}$ with $t\left( x\right) \leq
t$. Then the operator-functions $\mathrm{D}_{\nu}^{\mu}(x)=\mathrm{G}%
_{\nu}^{\mu}\left( x\right) -\mathrm{T}_{\nu}^{\mu}(x)$ are
quantum-stochastic derivatives of the function $t\mapsto\mathrm{T}_{t}$
which define the QS differential $\mathrm{dT}_{t}=\mathrm{d}\boldsymbol{i}%
_{0}^{t}(\mathbf{D})$ in the difference form so that $\mathrm{T}_{t}-\mathrm{%
T}_{0}=\boldsymbol{i}_{0}^{t}(\mathbf{T}_+-\mathbf{T})$. Moreover, $\mathrm{T%
}_{t}^{\ast}-\mathrm{T}_{0}^{\ast}=\boldsymbol{i}_{0}^{t}(\mathbf{T}%
_+^{\ddagger}-\mathbf{T}^{\ddagger})$, and we have the generalized
non-adapted It\^{o} formula
\begin{equation}
\mathrm{T}_{t}\mathrm{T}_{t}^{\ast}-\mathrm{T}_{0}\mathrm{T}_{0}^{\ast }=%
\boldsymbol{i}_{0}^{t}(\mathbf{TD}^{\ddagger}+\mathbf{DT}^{\ddagger }+%
\mathbf{DD}^{\ddagger})=\boldsymbol{i}_{0}^{t}(\mathbf{T}_+\mathbf{T}%
_+^{\ddagger }-\mathbf{TT}^{\ddagger}),   \label{two9}
\end{equation}
where $\mathbf{D}\mapsto\mathbf{D}^{\ddagger}$ is the pseudo-Euclidean
conjugation $[\mathrm{D}_{\nu}^{\mu}(x)]^{\ddagger}=[\mathrm{D}_{-\mu}^{-\nu
}(x)]^{\ast}$ of the triangular operators
\begin{equation*}
\mathbf{T}=\left[
\begin{array}{ccc}
\mathrm{T} & \mathrm{T}_{\circ}^{-} & \mathrm{T}_{+}^{-} \\
0 & \mathrm{T}_{\circ}^{\circ} & \mathrm{T}_{+}^{\circ} \\
0 & 0 & \mathrm{T}%
\end{array}
\right] ,\;\;\mathbf{D}=\left[
\begin{array}{ccc}
0 & \mathrm{D}_{\circ}^{-} & \mathrm{D}_{+}^{\circ} \\
0 & \mathrm{D}_{\circ}^{\circ} & \mathrm{D}_{+}^{\circ} \\
0 & 0 & 0%
\end{array}
\right] ,\;\;\mathbf{T}_+=\left[
\begin{array}{ccc}
\mathrm{T} & \mathrm{G}_{\circ}^{-} & \mathrm{G}_{+}^{-} \\
0 & \mathrm{G}_{\circ}^{\circ} & \mathrm{G}_{+}^{\circ} \\
0 & 0 & \mathrm{T}%
\end{array}
\right]\equiv\mathbf{G}
\end{equation*}
with the standard block-matrix multiplication $(\mathbf{XY})_{\nu}^{\mu
}=\Sigma\mathrm{X}_{\kappa}^{\mu}\mathrm{Y}_{\nu}^{\kappa}$.
\end{maintheorem}

\begin{proof}
$(\mathrm{i})\quad$ The adjoint operators $\boldsymbol{\epsilon}(T)$ and $%
\boldsymbol{\epsilon}(T^{\star})$, defining the $\star$-representation (\ref%
{two2}) with respect to the kernels $T$, bounded in the sense of (\ref{two4}%
) and (\ref{two5}), are $p$-bounded for $p\geq q+1/r$ by the estimate $\Vert%
\boldsymbol{\epsilon}(T)\Vert_{p}\leq\Vert T\Vert_{q}(r)$ and inequality (%
\ref{two6}); this leads to the exponential estimate
\begin{equation*}
\Vert\boldsymbol{\epsilon}(T)\Vert_{p}\leq\Vert T\Vert_{\boldsymbol{\alpha}%
}\exp\left\{\Vert\alpha_{+}^{-}\Vert^{(1)}+{\tfrac{1}{2}}\left(\Vert%
\alpha_{+}^{\circ
}\Vert^{(2)}(r)^{2}+\Vert\alpha_{\circ}^{-}\Vert^{(2)}(r)^{2}\right)\right%
\}.
\end{equation*}
The formula for the kernel multiplication $T^{\star}\cdot T$ which is given
above (\ref{3onem}) is $\boldsymbol{\beta}$-bounded for $\boldsymbol{\beta}=%
\boldsymbol{\alpha}\mathbf{\cdot}\boldsymbol{\gamma}$, since $\left\Vert
[T\cdot T^{\star}](\boldsymbol{\vartheta})\right\Vert \leq$
\begin{align*}
& \leq\sum\left\Vert T\left(
\begin{array}{ll}
\vartheta_{+}^{-}\boldsymbol{\setminus}\sigma_{+}^{-}, & \varkappa
_{\circ}^{-}\sqcup\varkappa_{+}^{-} \\
\vartheta_{+}^{\circ}\boldsymbol{\setminus}\varkappa_{+}^{\circ}, &
\vartheta_{\circ}^{\circ}\sqcup\varkappa_{+}^{\circ}%
\end{array}
\right) \right\Vert \cdot\left\Vert T^{\star}\left(
\begin{array}{ll}
\vartheta_{+}^{-}\boldsymbol{\setminus}\tau_{+}^{-}, & \vartheta_{\circ }^{-}%
\boldsymbol{\setminus}\varkappa_{\circ}^{-} \\
\varkappa_{+}^{-}\sqcup\varkappa_{+}^{\circ}, & \vartheta_{\circ}^{\circ
}\sqcup\varkappa_{\circ}^{-}%
\end{array}
\right) \right\Vert \\
& \leq\left\Vert T\right\Vert _{\boldsymbol{\alpha}}\left\Vert T^{\star
}\right\Vert _{\boldsymbol{\gamma}}\sum\boldsymbol{\alpha}^{\otimes}\left(
\begin{array}{ll}
\vartheta_{+}^{-}\boldsymbol{\setminus}\sigma_{+}^{-}, & \varkappa
_{\circ}^{-}\sqcup\varkappa_{+}^{-} \\
\vartheta_{+}^{\circ}\boldsymbol{\setminus}\varkappa_{+}^{\circ}, &
\vartheta_{\circ}^{\circ}\sqcup\varkappa_{+}^{\circ}%
\end{array}
\right) \boldsymbol{\gamma}^{\otimes}\left(
\begin{array}{ll}
\vartheta_{+}^{-}\boldsymbol{\setminus}\tau_{+}^{-}, & \vartheta_{\circ }^{-}%
\boldsymbol{\setminus}\varkappa_{\circ}^{-} \\
\varkappa_{+}^{-}\sqcup\varkappa_{+}^{\circ}, & \vartheta_{\circ}^{\circ
}\sqcup\varkappa_{\circ}^{-}%
\end{array}
\right) \\
& =\left\Vert T\right\Vert _{\boldsymbol{\alpha}}\left\Vert T^{\star
}\right\Vert _{\boldsymbol{\gamma}}(\boldsymbol{\alpha}\mathbf{\cdot }%
\boldsymbol{\gamma})^{\otimes}(\vartheta);\qquad(\boldsymbol{\alpha}\mathbf{%
\cdot }\boldsymbol{\gamma})_{\nu}^{\mu}=\sum_{\mu\leq\kappa\leq\nu}\alpha_{%
\kappa }^{\mu}\gamma_{\nu}^{\kappa},
\end{align*}
where we have employed the multiplication formula $\boldsymbol{\alpha }%
^{\otimes}\cdot\boldsymbol{\gamma}^{\otimes}=(\boldsymbol{\alpha}\cdot%
\boldsymbol{\gamma})^{\otimes}$ for scalar exponential kernels
\begin{equation*}
\boldsymbol{\beta}^{\otimes}(\boldsymbol{\vartheta})=\prod\beta_{\nu}^{\mu
}(\vartheta_{\nu}^{\mu});\;\beta_{\nu}^{\mu}(\vartheta)=\prod_{x\in\vartheta
}\beta_{\nu}^{\mu}(x):\;(\boldsymbol{\alpha}\mathbf{\cdot}\boldsymbol{\gamma
})_{\nu}^{\mu}(x)=\sum\alpha_{\kappa}^{\mu}(x)\gamma_{\nu}^{\kappa}(x).
\end{equation*}
Using the main formula
\begin{equation}
\int\sum_{\upsilon\subseteq\vartheta}f(\upsilon,\vartheta \boldsymbol{%
\setminus}\upsilon)\mathrm{d}\vartheta=\iint f(\upsilon ,\varkappa)\mathrm{d}%
\upsilon\mathrm{d}\varkappa,\quad\forall f\in \mathit{L}^{1}(\mathcal{X}%
\times\mathcal{X}),   \label{2oned}
\end{equation}
of the scalar integration \cite{BelB10,Be90f}, we write the scalar square of
the action (\ref{two2}) in the form $\left\Vert \boldsymbol{\epsilon}%
(T)\chi\right\Vert ^{2}=$
\begin{align*}
& =\int\left\Vert \sum_{\vartheta_{\circ}^{\circ}\sqcup\vartheta_{+}^{\circ
}=\vartheta^\circ}\iint T(\boldsymbol{\vartheta})\chi(\vartheta_{\circ} )%
\mathrm{d}\vartheta_{+}^{-}\mathrm{d}\vartheta_{\circ}^{-}\right\Vert ^{2}%
\mathrm{d}\vartheta^\circ \\
& =\iint\iint\int\sum_{\sigma_{\circ}^{\circ}\sqcup\sigma^-_{\circ
}=\vartheta^\circ}\sum_{\tau_{\circ}^{\circ}\sqcup\tau_{+}^{\circ}=%
\vartheta^\circ}\left\langle T(\boldsymbol{\sigma^{\prime}}%
)\chi(\sigma^{\circ})\big| T(\boldsymbol{\tau})\chi(\tau_{\circ})\right%
\rangle\mathrm{d}\vartheta^\circ\mathrm{d}\sigma_{+}^{-} \mathrm{d}%
\sigma^{\circ}_{+}\mathrm{d}\tau_{+}^{-}\mathrm{d}\tau_{\circ}^{-} \\
& =\iint\iint\int\left\langle\chi(\sigma^{\circ} )\big| T^\star\left(
\boldsymbol{\sigma} \right) T\left( \boldsymbol{\tau} \right)
\chi(\tau_{\circ}^{-}\sqcup\varkappa_{\circ})\right\rangle\mathrm{d}%
\boldsymbol{\varkappa}\mathrm{d}\sigma_{+}^{-}\mathrm{d}\sigma^{\circ}_{+}\,%
\mathrm{d}\tau_{+}^{-}\mathrm{d}\tau_{\circ}^{-} \\
& =\int\Big\langle \chi(\vartheta^\circ)\Big|\sum_{\vartheta_{\circ}^{\circ}%
\sqcup\vartheta_{+}^{\circ}=\vartheta^\circ}\iint(T^\star\cdot T)(%
\boldsymbol{\vartheta})\chi(\vartheta_{\circ})\mathrm{d}\vartheta_{+}^{-}%
\mathrm{d}\vartheta_{\circ}^{-}\Big\rangle \mathrm{d}\vartheta^\circ\quad\;=%
\;\langle\chi|{\epsilon}(T^\star\cdot T)\chi\rangle,
\end{align*}
where $\varkappa_{\circ}^{\circ}=\sigma_{\circ}^{\circ}\cap\tau_{\circ}^{%
\circ}$, $\varkappa_{+}^{\circ}=\sigma_{\circ}^{\circ}\cap\tau_{+}^{\circ}$,
$\varkappa_{\circ}^{-}=\tau_{\circ}^{\circ}\cap\sigma^-_{\circ}$, $%
\varkappa_{+}^{-}=\sigma^-_{\circ}\cap\tau_{+}^{\circ}$,
\begin{equation*}
\boldsymbol{\sigma}=\left(
\begin{array}{cc}
\sigma^-_+ & \varkappa^- \\
\sigma^\circ_+ & \varkappa^\circ
\end{array}
\right),\quad \boldsymbol{\tau}=\left(
\begin{array}{cc}
\tau^-_+ & \tau_\circ^- \\
\varkappa_+ & \varkappa_\circ
\end{array}
\right)
\end{equation*}
and the integral over $\mathrm{d}\vartheta^\circ$ of the double sum%
\begin{equation*}
\sum_{\sigma_{\circ}^{\circ}\sqcup\sigma^-_{\circ}=\vartheta^\circ}\sum
_{\tau_{\circ}^{\circ}\sqcup\tau_{+}^{\circ}=\vartheta^\circ}=\sum_{%
\varkappa
_{\circ}^{\circ}\sqcup\varkappa_{+}^{\circ}\sqcup\varkappa_{\circ}^{-}\sqcup%
\varkappa_{+}^{-}=\vartheta^\circ}
\end{equation*}
is replaced by the quadruple integral over $\mathrm{d}\boldsymbol{\varkappa}=%
\mathrm{d}\varkappa_{\circ}^{\circ}\mathrm{d}\varkappa_{+}^{\circ}\mathrm{d}%
\varkappa_{\circ}^{-}\mathrm{d}\varkappa_{+}^{-}$. In the last line we have
substituted back $\vartheta_\circ^-=\tau^-_\circ\sqcup\varkappa^-_\circ$, $%
\vartheta^\circ_+=\sigma_+^\circ\sqcup\varkappa^\circ_+$, $%
\vartheta^-_+=\sigma^-_+\sqcup\varkappa^-_+\sqcup\tau^-_+$, $%
\varkappa^\circ_\circ=\vartheta^\circ_\circ$, and also $\vartheta^\circ=%
\sigma^\circ$, $\vartheta_\circ=\tau_\circ$. Indeed it follows that $%
\epsilon(T)^\ast\epsilon(T)=\epsilon(T^\star\cdot T)$, from which we also
obtain $\epsilon(\hat{I})=\hat{\mathrm{I}}$ that may also be trivially
verified from (\ref{two2}).$\;\;\;\square$\newline
\linebreak $(\mathrm{ii})\quad$ We shall now consider the stochastic
differential $\mathrm{dT}_{t}$ of the multiple integral $\mathrm{T}_{t}=%
\boldsymbol{\imath}_{0}^{t}(\mathrm{M})$ of the operator function $\mathrm{M}%
(\boldsymbol{\upsilon })=\boldsymbol{\epsilon}(M(\boldsymbol{\upsilon}))$
defined by the quantum-stochastic derivatives
\begin{equation*}
\mathrm{D}_{\nu}^{\mu}(x)=\boldsymbol{\imath}_{0}^{t(x)}\circ \boldsymbol{%
\epsilon}\left( \dot{M}(\mathbf{x}_{\nu}^{\mu})\right) =\boldsymbol{%
\epsilon\circ\nu}_{0}^{t(x)}\left( \dot{M}(\mathbf{x}_{\nu}^{\mu})\right) =%
\boldsymbol{\epsilon}(D_{\nu}^{\mu}(x)),
\end{equation*}
representing the differences of the kernels
\begin{equation*}
D(\mathbf{x}_{\nu}^{\mu},\boldsymbol{\varkappa})=\boldsymbol{\nu}%
_{0}^{t\left( x\right) }\left(\boldsymbol{\varkappa},\dot{M}(\mathbf{x}%
_{\nu}^{\mu})\right)=\dot{T}_{t_{+}(x)}(\mathbf{x}_{\nu}^{\mu },\boldsymbol{%
\varkappa})-\dot{T}_{t(x)}(\mathbf{x}_{\nu}^{\mu},\boldsymbol{\varkappa}).
\end{equation*}
Here $\boldsymbol{\nu}_{0}^{t}\left(\boldsymbol{\varkappa},\dot{M}(\mathbf{x}%
)\right)=\sum_{\boldsymbol{\upsilon}\subseteq\boldsymbol{\varkappa}^{t}}M(%
\boldsymbol{\upsilon}\sqcup\boldsymbol{x},\boldsymbol{\varkappa
\setminus\upsilon})$, $\boldsymbol{x}$ is one of the atomic tables (\ref%
{2oneg}),
\begin{equation*}
\dot{T}_{t(x)}(\mathbf{x},\boldsymbol{\varkappa}) =\sum _{\boldsymbol{%
\upsilon}\subseteq\varkappa^{t(x)}}M(\boldsymbol{\upsilon },(\boldsymbol{%
\varkappa}\sqcup\boldsymbol{x})\boldsymbol{\setminus \upsilon})=T_{t(x)}(%
\boldsymbol{\varkappa}\sqcup\boldsymbol{x}),
\end{equation*}
where $\varkappa^{t(x)}=\varkappa\cap[0,t(x))$, and
\begin{align*}
\dot{T}_{t_{+}(x)}(\mathbf{x},\boldsymbol{\varkappa}) & =\sum _{\boldsymbol{%
\upsilon}\subseteq\varkappa^{t(x)}\sqcup\boldsymbol{x}}M(\boldsymbol{\upsilon%
},(\boldsymbol{\varkappa}\sqcup\boldsymbol{x})\boldsymbol{\setminus\upsilon})
\\
& =T_{t(x)}(\boldsymbol{\varkappa}\sqcup\boldsymbol{x})+\sum _{\boldsymbol{%
\upsilon}\subseteq\boldsymbol{\varkappa}^{t(x)}}M(\boldsymbol{\upsilon}\sqcup%
\boldsymbol{x},\boldsymbol{\varkappa \setminus\upsilon}) \\
& =\dot{T}_{t(x)}(\mathbf{x},\boldsymbol{\varkappa})+\boldsymbol{\nu}%
_{0}^{t(x)}\left(\boldsymbol{\varkappa},\dot{M}(\mathbf{x})\right).
\end{align*}
We note that $T_{t_{+}}(\boldsymbol{\vartheta})=\sum_{\boldsymbol{\upsilon }%
\subseteq\boldsymbol{\vartheta}^{t_{+}}}M(\boldsymbol{\upsilon},\boldsymbol{%
\vartheta\setminus\upsilon})$, where $t_{+}=\min \{t(\boldsymbol{x})>t:%
\boldsymbol{x}\in\boldsymbol{\vartheta}\}$, $\boldsymbol{\vartheta}%
^{t_{+}}=\{\boldsymbol{x}\in\boldsymbol{\vartheta }:t(x)\leq t\}$, so that $%
\dot{T}_{t_{+}(x)}(\mathbf{x,}\boldsymbol{\varkappa })=\dot{T}_{t}(\mathbf{x,%
}\boldsymbol{\varkappa})$ for any $t\in (t(\boldsymbol{x}),t_{+}(\boldsymbol{%
x})]$. Thus the derivatives $\mathrm{D}_{\nu}^{\mu}(x)$, $x\in \mathbb{X}^{t}
$, defining the increment $\mathrm{T}_{t}-\mathrm{T}_{0}=\boldsymbol{i}%
_{0}^{t}(\mathbf{D})$, can be written in the form of the differences
\begin{equation*}
\mathrm{D}_{\nu}^{\mu}(x)=\boldsymbol{\epsilon}\left( \dot{T}_{t_{+}(x)}(%
\mathbf{x}_{\nu}^{\mu})\right) -\boldsymbol{\epsilon}\left( \dot {T}_{t(x)}(%
\mathbf{x}_{\nu}^{\mu})\right)
\end{equation*}
of the operators (\ref{two8}). If we consider $\dot{T}_{t}(\mathbf{x})$ as
one of the four entries $\dot{T}_{t}(x)_{\nu}^{\mu}=\dot{T}_{t}(\mathbf{x}%
_{\nu }^{\mu})$ in the matrix-kernel $\left[ \dot{T}_{t}(x)_{\nu}^{\mu}%
\right] \equiv\mathbf{\dot{T}}_{t}(x)$ with $\dot{T}_{t}(x)_{-}^{-}=T_{t(x)}=%
\dot {T}_{t}(x)_{+}^{+}$, we can define the triangular matrix-functions
\begin{equation*}
\mathbf{T}(x)=\boldsymbol{\epsilon}\left(\mathbf{\dot{T}}_{t\left( x\right)
}(x)\right),\;\;\;\;\mathbf{T}_+(x)=\boldsymbol{\epsilon}\left(\mathbf{\dot{T%
}}_{t_{+}(x)}(x)\right).
\end{equation*}
This allows us to obtain the quantum non-adapted It\^{o} formula in the form
\begin{equation*}
\mathrm{T}_{t}\mathrm{T}_{t}^{\ast}-\mathrm{T}_{0}\mathrm{T}_{0}^{\ast }=%
\boldsymbol{i}_{0}^{t}(\mathbf{TD}^{\ddagger}+\mathbf{DT}^{\ddagger }+%
\mathbf{DD}^{\ddagger}),
\end{equation*}
where $\mathbf{D}(x)=\mathbf{T}_+(x)-\mathbf{T}(x)$ as a consequence of the
fact that the map (\ref{two2}) is a $\star$-homomorphism $\mathrm{T}_{t}%
\mathrm{T}_{t}^{\ast}=\boldsymbol{\epsilon}(T_{t}\cdot T_{t}^{\star})$, and
of the formula (\ref{3onem}) for the product of the operator-valued kernels $%
T_{t}$ and $T_{t}^{\star}$ which can be written in the form
\begin{equation*}
\left[T_{t}\cdot T_{t}^{\star}\right](\boldsymbol{\vartheta}\sqcup%
\boldsymbol{x}_{\nu }^{\mu})=\sum_{\kappa=\mu}^{\nu}\left(\dot{T}%
_{t}(x)_{\kappa}^{\mu}\cdot\dot{T}_{t}^{\star}(x)_{\nu}^{\kappa}\right)(%
\boldsymbol{\vartheta})=\left( \mathbf{\dot {T}}_{t}\cdot\mathbf{\dot{T}}%
_{t}^{\ddag}\right) _{\nu}^{\mu}\left( x,\boldsymbol{\vartheta}\right) .
\end{equation*}
Here the right-hand side is computed as an entry in the product of the
kernel-valued triangular operator $\mathbf{\dot{T}}_{t}(x)=\dot{T}_{t}(%
\mathbf{x}_{\cdot}^{\cdot})$ with $\mathbf{\dot{T}}_{t}^{\ddag}\left(
x\right) =\dot{T}_{t}^{\star}(\mathbf{x}_{\cdot}^{\cdot})$ which defines the
multiplication of the entries in terms of the product of the triangular
operator-valued kernel $\left[ \dot{T}_{t}(\mathbf{x}_{\nu}^{\mu})\right] $
and $\left[\dot{T}_t^\star(\mathbf{x}_{\nu}^{\mu})\right]^{}=\left[\dot{T}_t(%
\mathbf{x}_{-\mu }^{-\nu})^{\ast}\right]$ with $\dot{T}_{t}(\mathbf{x}%
_{-}^{-},\boldsymbol{\vartheta})^{\ast}=T_{t}(\boldsymbol{\vartheta})^{\ast
}=\dot{T}_{t}(\mathbf{x}_{+}^{+},\boldsymbol{\vartheta})^{\ast}$ and $\dot{T}%
(\mathbf{x},\boldsymbol{\vartheta})^{\ast}=T(\boldsymbol{\vartheta }\sqcup%
\boldsymbol{x})^{\ast}$. Indeed, from (\ref{3onem}) we obtain%
\begin{align*}
\lbrack T\cdot T^{\star}](\boldsymbol{\vartheta}\sqcup\boldsymbol{x}_{\circ
}^{\circ}) & =[\dot{T}(\mathbf{x}_{\circ}^{\circ})\cdot\dot{T}^{\star }(%
\mathbf{x}_{\circ}^{\circ})](\boldsymbol{\vartheta}), \\
\lbrack T\cdot T^{\star}](\boldsymbol{\vartheta}\sqcup\boldsymbol{x}%
_{+}^{\circ}) & =[\dot{T}(\mathbf{x}_{\circ}^{\circ})\cdot\dot{T}^{\star }(%
\mathbf{x}_{+}^{\circ})+\dot{T}(\mathbf{x}_{+}^{\circ})\cdot T^{\star }](%
\boldsymbol{\vartheta}), \\
\lbrack T\cdot T^{\star}](\boldsymbol{\vartheta}\sqcup\boldsymbol{x}_{\circ
}^{-}) & =[T\cdot\dot{T}^{\star}(\mathbf{x}_{\circ}^{-})+\dot{T}(\mathbf{x}%
_{\circ}^{-})\cdot\dot{T}^{\star}(\mathbf{x}_{\circ}^{\circ })](\boldsymbol{%
\vartheta}), \\
\lbrack T\cdot T^{\star}](\boldsymbol{\vartheta}\sqcup\boldsymbol{x}%
_{+}^{-}) & =[T\cdot\dot{T}^{\star}(\mathbf{x}_{+}^{-})+\dot{T}(\mathbf{x}%
_{\circ}^{-})\cdot\dot{T}^{\star}(\mathbf{x}_{+}^{\circ})+\dot{T}(\mathbf{x}%
_{+}^{-})\cdot T^{\star}](\boldsymbol{\vartheta}),
\end{align*}
which are the matrix elements of the kernel-valued triangular operator%
\begin{equation*}
\Big[  \left[ T\cdot T^{\star}\right] (\boldsymbol{\vartheta}\sqcup%
\boldsymbol{x}_{\nu}^{\mu})\Big]  =\left[ \dot{T}(\mathbf{x}%
_{\lambda}^{\mu})\cdot\dot{T}^{\star }(\mathbf{x}_{\nu}^{\lambda})\right]
\left( \boldsymbol{\vartheta}\right) =\left( \mathbf{\dot{T}}\cdot \mathbf{%
\dot{T}}^{\ddagger}\right) \left( x,\boldsymbol{\vartheta }\right) .
\end{equation*}
This allows us to write the operator-valued triangular matrix%
\begin{equation*}
\left[ \nabla_{\mathbf{x}_{\nu}^{\mu}}\boldsymbol{\epsilon}\left(T\cdot
T^{\star }\right)\right] =\sum_{\kappa=\mu}^{\nu}\boldsymbol{\epsilon}\left[
\dot {T}(x)_{\kappa}^{\mu}\cdot\dot{T}^{\star}(x)_{\nu}^{\kappa}\right] =%
\boldsymbol{\epsilon}\left(\mathbf{\dot{T}}\cdot\mathbf{\dot{T}}%
^{\ddagger}\right)(x)
\end{equation*}
as the block-matrix product
\begin{equation*}
\boldsymbol{\epsilon}\left(\mathbf{\dot{T}}(x)\cdot\mathbf{\dot{T}}^{\ddag
}(x)\right)=\boldsymbol{\epsilon}\left(\mathbf{\dot{T}}\left( x\right)
\right)\boldsymbol{\epsilon}\left(\mathbf{\dot{T}}^{\ddagger}(x)\right)=%
\boldsymbol{\epsilon }\left(\mathbf{\dot{T}}(x)\right)\boldsymbol{\epsilon}%
\left(\mathbf{\dot{T}}(x)\right)^{\ddagger}
\end{equation*}
of $\mathbf{T}=\boldsymbol{\epsilon}\left(\mathbf{\dot{T}}\right)$and $%
\mathbf{T}^{\ddagger}=\boldsymbol{\epsilon}\left(\mathbf{\dot{T}}%
\right)^{\ddagger}$. So we have proved that%
\begin{equation*}
\boldsymbol{\nabla}_{x}\left(\boldsymbol{\epsilon}(T\cdot T^{\star})\right)=%
\mathbf{T}\left( x\right) \mathbf{T}^{\ddagger}\left( x\right) \equiv
\boldsymbol{\nabla}_{x}\left(\boldsymbol{\epsilon}(T)\right)\boldsymbol{%
\nabla}_{x}\left(\boldsymbol{\epsilon}(T^{\star})\right)
\end{equation*}
in terms of the germ-matrices $\mathbf{T}\left( x\right) =\boldsymbol{\nabla
}_{x}\left(\boldsymbol{\epsilon}(T)\right)$, $\mathbf{T}^{\ddagger}\left(
x\right) =\boldsymbol{\nabla}_{x}\left(\boldsymbol{\epsilon}%
(T^{\star})\right)$ having the operator entries
\begin{equation*}
\mathbf{T}\left( x\right) _{\nu}^{\mu}=\boldsymbol{\epsilon}\left(\dot{T}%
\left( \mathbf{x}_{\nu}^{\mu}\right) \right ),\;\mathbf{T}^{\ddagger}\left(
x\right) _{\nu}^{\mu}=\boldsymbol{\epsilon}\left(\dot{T}\left( \mathbf{x}%
_{-\mu}^{-\nu }\right) \right )^{\ast}.
\end{equation*}
Thus we evaluate the germ of $\mathrm{T}_{t}\mathrm{T}_{t}^{\ast}$ at $t=t(x)
$ and $t=t_{+}(x)$ and obtain the difference formula
\begin{equation*}
\boldsymbol{\epsilon}\left(\left(\mathbf{\dot{T}}_{t_{+}(x)}\cdot \mathbf{%
\dot{T}}_{t_{+}(x)}^{\ddag}\right)(x)-\left(\mathbf{\dot{T}}_{t(x)}\cdot%
\mathbf{\dot{T}}_{t(x)}^{\ddag }\right)(x)\right)=\mathbf{T}_+(x)\mathbf{T}%
_+^{\ddagger}(x)-\mathbf{T}(x)\mathbf{T}^{\ddagger}(x),
\end{equation*}
which allows us to write the stochastic derivative of the quantum
non-adapted process $\mathrm{T}_{t}\mathrm{T}_{t}^{\ast}$ in the form
\begin{equation*}
\mathrm{d}\left(\mathrm{T}_{t}\mathrm{T}_{t}^{\ast}\right)=\mathrm{d}%
\boldsymbol{i}_{0}^{t}\left(\mathbf{T}_+\mathbf{T}_+^{\ddagger}-\mathbf{TT}%
^{\ddagger}\right),
\end{equation*}
corresponding to (\ref{two9}). The theorem has been proved.
\end{proof}


\section{Weak Form and $\mathrm{Q}$-Adapted Quantum It\^{o} Formula}

\begin{proposition}
Using the non-adapted table of stochastic multiplication,
\begin{align*}
\mathbf{T}_+^{\ddagger}\mathbf{T}_+-\mathbf{T}^{\ddagger}\mathbf{T} = & \;
\mathbf{D}^{\ddagger}\mathbf{T}+\mathbf{T}^{\ddagger}\mathbf{D}+\mathbf{D}%
^{\ddagger}\mathbf{D} \\
= &\left[
\begin{array}{ccc}
0, & \mathrm{T}^{\ast}\mathrm{D}_{\circ}^{-}, & \mathrm{T}^{\ast }\mathrm{D}%
_{+}^{-}+\mathrm{D}_{+}^{-\ast}\mathrm{T} \\
0, & 0, & \mathrm{D}_{\circ}^{-\ast}\mathrm{T} \\
0, & 0, & 0%
\end{array}
\right] +\left[
\begin{array}{ccc}
0, & \mathrm{D}_{+}^{\circ\ast}\mathrm{D}_{\circ}^{\circ}, & \mathrm{D}%
_{+}^{\circ\ast}\mathrm{D}_{+}^{\circ} \\
0, & \mathrm{D}_{\circ}^{\circ\ast}\mathrm{D}_{\circ}^{\circ}, & \mathrm{D}%
_{\circ}^{\circ\ast}\mathrm{D}_{+}^{\circ} \\
0, & 0, & 0%
\end{array}
\right] \\
& +\left[
\begin{array}{ccc}
0, & \mathrm{D}_{+}^{\circ\ast}\mathrm{T}_{\circ}^{\circ}+\mathrm{T}%
_{+}^{\circ\ast}\mathrm{D}_{\circ}^{\circ}, & \mathrm{D}_{+}^{\circ\ast }%
\mathrm{T}_{+}^{\circ}+\mathrm{T}_{+}^{\circ\ast}\mathrm{D}_{+}^{\circ} \\
0, & \mathrm{D}_{\circ}^{\circ\ast}\mathrm{T}_{\circ}^{\circ}+\mathrm{T}%
_{\circ}^{\circ\ast}\mathrm{D}_{\circ}^{\circ}, & \mathrm{D}%
_{\circ}^{\circ\ast}\mathrm{T}_{+}^{\circ}+\mathrm{T}_{\circ}^{\circ \ast}%
\mathrm{D}_{+}^{\circ} \\
0, & 0, & 0%
\end{array}
\right],
\end{align*}
we can write \textup{(\ref{two9})} in a weak form $\left\Vert \mathrm{T}%
_{t}\chi\right\Vert ^{2}-\left\Vert \mathrm{T}_{0}\chi\right\Vert ^{2}=$
\begin{align}
= &\int_{\mathbb{X}^{t}}2\boldsymbol{\Re}\left\langle \mathrm{T}%
_{t(x)}\chi\mid\mathrm{D}_{+}^{-}(x)\chi+\mathrm{D}_{\circ}^{-}(x)\mathring{%
\chi }\left( x\right) \right\rangle \mathrm{d}x +\int_{\mathbb{X}^{t}}
\left\Vert \mathrm{D}_{+}^{\circ}(x)\chi +\mathrm{D}_{\circ}^{\circ}(x)%
\mathring{\chi}\left( x\right) \right\Vert ^{2}\mathrm{d}x  \notag \\
&\;\;\;\;\;\;\;\;\;\;\;\;\;\;\;\;\;\;\;\;\;\;\;\;\;\; \;\;\;\;\;\;+\int_{%
\mathbb{X}^t}2\boldsymbol{\Re}\left\langle \nabla_{x}\mathrm{T}%
_{t(x)}\chi\mid\mathrm{D}_{+}^{\circ}(x)\chi+\mathrm{D}_{\circ}^{\circ}(x)%
\mathring{\chi}\left( x\right) \right\rangle \mathrm{d}x,  \label{two10}
\end{align}
where $\nabla_{x}\mathrm{T}_{t(x)}\chi=\mathrm{T}_{+}^{\circ}(x)\chi +%
\mathrm{T}_{\circ}^{\circ}(x)\mathring{\chi}\left( x\right) $. This formula
is valid for any non-adapted single integral $\mathrm{T}_{t}=\mathrm{T}_{0}+%
\boldsymbol{i}_{0}^{t}(\mathbf{D})$ with square integrable values $\mathrm{T}%
_{t}\chi$ for all $\chi\in\mathit{G}_{+}$ with $\nabla_{x}$ understood as
the Malliavin derivative \cite{Mal78} at the point $x\in \mathbb{X}$
represented in Fock space by $[\nabla_{x}\mathrm{T}_{t(x)}\chi](\vartheta)=[%
\mathrm{T}_{t(x)}\chi](\vartheta\sqcup x)$.
\end{proposition}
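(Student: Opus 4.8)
The plan is to deduce the weak identity (\ref{two10}) from the operator-level It\^o formula of the Main Theorem by pairing it with a single test vector $\chi\in\mathit{G}_+$, and then to evaluate the resulting scalar single integral entry-by-entry through the Fock representation (\ref{two2}) together with the Malliavin identity $[\nabla_x\psi](\vartheta)=\psi(\vartheta\sqcup x)$. Throughout I write $\mathring{\chi}(x)$ for the chain-function $\vartheta\mapsto\chi(\vartheta\sqcup x)$ and $\mathrm{T}=\mathrm{T}_{t(x)}$ for the diagonal corner entry.

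First I would record the germ identity displayed at the head of the proposition as the adjoint companion of (\ref{two9}). Since $\mathbf{T}_+=\mathbf{T}+\mathbf{D}$ and the pseudo-Euclidean conjugation $\mathbf{X}\mapsto\mathbf{X}^{\ddagger}$ is linear, the bilinear expansion gives $\mathbf{T}_+^{\ddagger}\mathbf{T}_+-\mathbf{T}^{\ddagger}\mathbf{T}=\mathbf{D}^{\ddagger}\mathbf{T}+\mathbf{T}^{\ddagger}\mathbf{D}+\mathbf{D}^{\ddagger}\mathbf{D}$, and carrying out the triangular block multiplication $(\mathbf{XY})_{\nu}^{\mu}=\sum_{\kappa}\mathrm{X}_{\kappa}^{\mu}\mathrm{Y}_{\nu}^{\kappa}$ entry by entry reproduces the three displayed matrices: the middle one is exactly $\mathbf{D}^{\ddagger}\mathbf{D}$, while the outer two collect, respectively, the corner-$\mathrm{T}$ and the $\circ$-block contributions of $\mathbf{D}^{\ddagger}\mathbf{T}+\mathbf{T}^{\ddagger}\mathbf{D}$. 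Applying the Main Theorem to $\mathrm{T}_t^{\ast}\mathrm{T}_t$ in place of $\mathrm{T}_t\mathrm{T}_t^{\ast}$ (equivalently, taking adjoints in the computation $\boldsymbol{\nabla}_x\boldsymbol{\epsilon}(T\cdot T^{\star})=\mathbf{T}\mathbf{T}^{\ddagger}$ but with $T^{\star}\cdot T$) yields the operator identity $\mathrm{T}_t^{\ast}\mathrm{T}_t-\mathrm{T}_0^{\ast}\mathrm{T}_0=\boldsymbol{i}_0^t(\mathbf{E})$ with $\mathbf{E}=\mathbf{T}_+^{\ddagger}\mathbf{T}_+-\mathbf{T}^{\ddagger}\mathbf{T}$, so that $\Vert\mathrm{T}_t\chi\Vert^2-\Vert\mathrm{T}_0\chi\Vert^2=\langle\chi\mid\boldsymbol{i}_0^t(\mathbf{E})\chi\rangle$.

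The crux is then the \emph{weak action of the single non-adapted integral}: I would establish that for any triangular germ $\mathbf{E}=[\mathrm{E}_\nu^\mu]$ one has
\[
\langle\chi\mid\boldsymbol{i}_0^t(\mathbf{E})\chi\rangle=\int_{\mathbb{X}^t}\Big(\langle\chi\mid\mathrm{E}_+^-\chi\rangle+\langle\chi\mid\mathrm{E}_\circ^-\mathring{\chi}(x)\rangle+\langle\mathring{\chi}(x)\mid\mathrm{E}_+^\circ\chi\rangle+\langle\mathring{\chi}(x)\mid\mathrm{E}_\circ^\circ\mathring{\chi}(x)\rangle\Big)\,\mathrm{d}x.
\]
This uses the four-fundamental-integral decomposition $\boldsymbol{i}_0^t(\mathbf{E})=\int_{\mathbb{X}^t}\sum_{\mu,\nu}\Lambda_\mu^\nu(\mathrm{E}_\nu^\mu,\mathrm{d}x)$ supplied by the Lemma: the drift entry $\mathrm{E}_+^-$ pairs $\chi$ with $\chi$, the creation entry $\mathrm{E}_\circ^-$ and the annihilation entry $\mathrm{E}_+^\circ$ each insert one Malliavin derivative $\mathring{\chi}(x)$ on the appropriate side, and the number entry $\mathrm{E}_\circ^\circ$ inserts it on both. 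Concretely this follows by writing $\langle\chi\mid\boldsymbol{\epsilon}(N_\nu^\mu(\mathrm{d}x))\chi\rangle$ through the representation (\ref{two2}), reorganising the chain sum over the single adjoined point by the integration formula (\ref{2oned}), and identifying the resulting boundary annihilation with $\mathring{\chi}(x)$ via $[\nabla_x\psi](\vartheta)=\psi(\vartheta\sqcup x)$; the adaptedness of $\mathrm{T}_{t(x)}$ to $[0,t(x))$ is what makes the integrand and the point insertion at $x$ compatible. I expect this formula to be the main obstacle, since it is where the Fock-space combinatorics of (\ref{two2}), the integration identity (\ref{2oned}), and the Malliavin reading of $\nabla_x$ must be combined.

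Finally I would substitute the entries of $\mathbf{E}$ read off from the three matrices into this weak-action formula and regroup, noting that moving the corner $\mathrm{T}=\mathrm{T}_{t(x)}$ onto the bra is mere adjunction $\langle\chi\mid\mathrm{T}^{\ast}(\cdot)\rangle=\langle\mathrm{T}\chi\mid\cdot\rangle$. The drift and creation contributions combine into $2\boldsymbol{\Re}\langle\mathrm{T}_{t(x)}\chi\mid\mathrm{D}_+^-\chi+\mathrm{D}_\circ^-\mathring{\chi}(x)\rangle$; the pure $\mathbf{D}^{\ddagger}\mathbf{D}$ terms assemble, by recombining $\Vert\mathrm{D}_+^\circ\chi\Vert^2+\Vert\mathrm{D}_\circ^\circ\mathring{\chi}\Vert^2+2\boldsymbol{\Re}\langle\mathrm{D}_+^\circ\chi\mid\mathrm{D}_\circ^\circ\mathring{\chi}\rangle$ into a single square $\Vert\mathrm{D}_+^\circ\chi+\mathrm{D}_\circ^\circ\mathring{\chi}(x)\Vert^2$; and the remaining $\mathbf{T}$--$\mathbf{D}$ cross terms in the $\circ$-block collapse, using the given germ relation $\nabla_x\mathrm{T}_{t(x)}\chi=\mathrm{T}_+^\circ\chi+\mathrm{T}_\circ^\circ\mathring{\chi}$, into $2\boldsymbol{\Re}\langle\nabla_x\mathrm{T}_{t(x)}\chi\mid\mathrm{D}_+^\circ\chi+\mathrm{D}_\circ^\circ\mathring{\chi}(x)\rangle$, which is precisely (\ref{two10}). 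Since this derivation invokes only the single-integral (differential) form $\mathrm{T}_t=\mathrm{T}_0+\boldsymbol{i}_0^t(\mathbf{D})$ and the germ matrices $\mathbf{T},\mathbf{D}$, and never the multiple-integral structure $T_t=\boldsymbol{\nu}_0^t(M)$, the asserted validity for an arbitrary non-adapted single integral with square-integrable $\mathrm{T}_t\chi$ follows at once.
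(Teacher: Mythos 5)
Your proposal is correct and follows essentially the same route as the paper: both rest on the weak pairing identity $\langle\chi\mid\boldsymbol{i}_{0}^{t}(\mathbf{E})\chi\rangle=\int_{\mathbb{X}^{t}}[\langle\chi\mid\mathrm{E}_{+}^{-}\chi+\mathrm{E}_{\circ}^{-}\mathring{\chi}(x)\rangle+\langle\mathring{\chi}(x)\mid\mathrm{E}_{+}^{\circ}\chi+\mathrm{E}_{\circ}^{\circ}\mathring{\chi}(x)\rangle]\,\mathrm{d}x$ for the single non-adapted integral, followed by substitution of the triangular germ decomposition $\mathbf{D}^{\ddagger}\mathbf{T}+\mathbf{T}^{\ddagger}\mathbf{D}+\mathbf{D}^{\ddagger}\mathbf{D}$ and the same entry-by-entry regrouping into the drift/creation cross term, the It\^{o} square, and the $\nabla_{x}\mathrm{T}_{t(x)}$ term. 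If anything you are slightly more explicit than the paper, which merely asserts the weak pairing formula rather than deriving it from (\ref{two2}) and (\ref{2oned}), and you correctly flag the need to use the $T^{\star}\cdot T$ version of the product formula so as to land on $\Vert\mathrm{T}_{t}\chi\Vert^{2}$.
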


\begin{proof}
Indeed, taking into account that
\begin{equation*}
\left\langle \chi\mid\boldsymbol{i}_{0}^{t}(\mathbf{D})\chi\right\rangle
=\int_{\mathbb{X}^{t}}\left[\left\langle \chi\mid\mathrm{D}_{+}^{-}(x)\chi+%
\mathrm{D}_{\circ}^{-}\mathring{\chi}(x)\right\rangle +\left\langle
\mathring{\chi }(x)\mid\mathrm{D}_{+}^{\circ}(x)\chi+\mathrm{D}%
_{\circ}^{\circ}(x)\mathring{\chi}(x)\right\rangle \right]\mathrm{d}x,
\end{equation*}
we readily obtain the weak form of the non-adapted It\^{o} formula if we
substitute $\mathbf{D}^{\ddagger}\mathbf{T}+\mathbf{D}^{\ddagger}\mathbf{D}+%
\mathbf{T}^{\ddagger}\mathbf{D}$ in place of $\mathbf{D}$. But first notice
that we may write
\begin{equation*}
\left\langle \chi\mid\boldsymbol{i}_{0}^{t}(\mathbf{D})\chi\right\rangle=%
\int_{\mathbb{X}^t}\langle \chi\mid\boldsymbol{\nabla}^\ddag_x\mathbf{D}(x)%
\boldsymbol{\nabla}_x\chi\rangle\mathrm{d}x,
\end{equation*}
where $\langle\chi\mid\boldsymbol{\nabla}_x^\ddag=(\chi^\ast,\mathring{\chi}%
^\ast(x),0)$, and $\boldsymbol{\nabla}_x\chi$ is its pseudo-adjoint. Then we
may write the weak form of the non-adapted It\^o formula as
\begin{align*}
\left\Vert \mathrm{T}_{t}\chi\right\Vert ^{2}-\left\Vert \mathrm{T}%
_{0}\chi\right\Vert ^{2}&=\int_{\mathbb{X}^t}\langle \chi\mid\boldsymbol{%
\nabla}^\ddag_x\left( \mathbf{D}^{\ddagger}(x)\mathbf{T}(x)+\mathbf{D}%
^{\ddagger}(x)\mathbf{D}(x)+\mathbf{T}^{\ddagger}(x)\mathbf{D}(x)\right)%
\boldsymbol{\nabla}_x\chi\rangle\mathrm{d}x \\
&=2\Re\int_{\mathbb{X}^t}\langle \chi\mid\boldsymbol{\nabla}^\ddag_x\mathbf{T%
}^{\ddagger}(x)\mathbf{D}(x)\boldsymbol{\nabla}_x\chi\rangle\mathrm{d}x \\
&\;\;\;\;\;\;\;\;\;\;\;\;\;\;\;\;\;\;\;\;\;\;\;+\int_{\mathbb{X}^t}\langle
\chi\mid\boldsymbol{\nabla}^\ddag_x\mathbf{D}^{\ddagger}(x)\mathbf{D}(x)%
\boldsymbol{\nabla}_x\chi\rangle\mathrm{d}x,
\end{align*}
giving us a non-adapted generalization of the It\^o term of the
Hudson-Parthasarathy formula for the adapted integrals in the form
\begin{equation*}
\int_{\mathbb{X}^t}\langle \chi\mid\boldsymbol{\nabla}^\ddag_x\mathbf{D}%
^{\ddagger}(x)\mathbf{D}(x)\boldsymbol{\nabla}_x\chi\rangle\mathrm{d}x
=\int_{\mathbb{X}^{t}}\left\Vert \mathrm{D}_{+}^{\circ}(x)\chi+\mathrm{D}%
_{\circ}^{\circ }(x)\mathring{\chi}\left( x\right) \right\Vert ^{2}\mathrm{d}%
x,
\end{equation*}
and indeed
\begin{align*}
\int_{\mathbb{X}^t}\langle \chi\mid\boldsymbol{\nabla}^\ddag_x\mathbf{T}%
^{\ddagger}(x)\mathbf{D}(x)\boldsymbol{\nabla}_x\chi\rangle\mathrm{d}x&=
\int_{\mathbb{X}^{t}}\left\langle \mathrm{T}_{t(x)}\chi\mid\mathrm{D}%
_{+}^{-}(x)\chi+\mathrm{D}_{\circ}^{-}(x)\mathring{\chi }\left( x\right)
\right\rangle \mathrm{d}x \\
&\;\;\;\;\;\;\;\;+\int_{\mathbb{X}^t}\left\langle \nabla_{x}\mathrm{T}%
_{t(x)}\chi\mid\mathrm{D}_{+}^{\circ}(x)\chi+\mathrm{D}_{\circ}^{\circ}(x)%
\mathring{\chi}\left( x\right) \right\rangle \mathrm{d}x,  \notag
\end{align*}
as long as $\nabla_{x}\mathrm{T}_{t(x)}\chi=\mathrm{T}_{+}^{\circ}(x)\chi+%
\mathrm{T}_{\circ}^{\circ}\left( x\right) \mathring{\chi}\left( x\right) $.
\end{proof}

Note that if $\mathrm{T}_{t}=\boldsymbol{\epsilon}(T_{t})$ is the
representation \textup{(\ref{two2})} of the kernel \textup{(\ref{two1})},
then obviously
\begin{equation*}
\lbrack\boldsymbol{\epsilon}(T_{t})\chi](\vartheta\sqcup x)=\left[%
\boldsymbol{\epsilon}\left(\dot{T}_{t}(\mathbf{x}_{+}^{\circ})\right)\chi +%
\boldsymbol{\epsilon}\left(\dot{T}(\mathbf{x}_{\circ}^{\circ})\right)%
\mathring{\chi }\left( x\right) \right ](\vartheta),
\end{equation*}
and therefore $\nabla_{x}\mathrm{T}_{t(x)}\chi=\mathrm{T}_{+}^{\circ}(x)\chi+%
\mathrm{T}_{\circ}^{\circ}\left( x\right) \mathring{\chi}\left( x\right) $
is satisfied. Also notice that
\begin{equation*}
\int_{\mathbb{X}^t}\langle \chi\mid\boldsymbol{\nabla}^\ddag_x\mathbf{T}%
^{\ddagger}(x)\mathbf{D}(x)\boldsymbol{\nabla}_x\chi\rangle\mathrm{d}x=
\int_{\mathbb{X}^{t}}\langle \chi\mid\mathrm{T}_{t(x)}^\ast\mathrm{dT}%
_{t(x)}\chi\rangle.
\end{equation*}
In the scalar case $\mathfrak{k}_{x}=\mathbb{C}$ for $\mathrm{D}_{+}^{-}=0=%
\mathrm{D}_{\circ}^{\circ}$, $\mathrm{D}_{\circ}^{-}(x)=\mathrm{D}(x)=%
\mathrm{D}_{+}^{\circ}(x)$, and $\mathrm{T}_{\circ }^{\circ}(x)=\mathrm{T}%
_{t(x)}$, $\mathrm{T}_{\circ}^{-}(x)=\mathrm{T}_{+}^{\circ}(x)\equiv[%
\nabla_x,\mathrm{T}_{t(x)}]:=\partial\mathrm{T}\left( x\right) $ we obtain
\begin{equation*}
\left\Vert \mathrm{T}_{t}\chi\right\Vert ^{2}-\left\Vert \mathrm{T}%
_{0}\chi\right\Vert ^{2}=\int_{\mathbb{X}^{t}}2\Re\left\langle \mathrm{T}%
_{t(x)}\chi\mid\mathrm{dT}_{t(x)}\chi\right\rangle+\int_{\mathbb{X}%
^{t}}\left\Vert \mathrm{D}(x)\chi\right\Vert ^{2}\mathrm{d}x,
\end{equation*}
where the first term may be decomposed into the form
\begin{align}
\int_{\mathbb{X}^{t}}2\Re\left\langle \mathrm{T}_{t(x)}\chi\mid\mathrm{dT}%
_{t(x)}\chi\right\rangle&=\int_{\mathbb{X}^{t}}2\Re\langle \chi\mid%
\boldsymbol{\nabla}^\ddag_x\Big(\mathrm{T}^{\ast}_{t(x)}\otimes\mathbf{I}(x)%
\Big) \mathbf{D}(x)\boldsymbol{\nabla}_x\chi\rangle\mathrm{d}x  \notag \\
&\;\;\;\;\;\;\;\;\;\;\;\;\;\;\;\;\;\;\;\;\;\;\;\;+\int_{\mathbb{X}%
^{t}}2\Re\left\langle \partial\mathrm{T}(x)\chi\mid\mathrm{D}%
(x)\chi\right\rangle.  \label{zzz}
\end{align}
The second term of (\ref{zzz}) vanishes in the adapted case, and the first
term is the adapted contribution corresponding to $\mathbf{T}(x)=\mathrm{T}%
_{t(x)}\otimes\mathbf{I}(x)$. This gives the It\^{o} formula for the
normally-ordered non-adapted integral
\begin{equation*}
\mathrm{T}_{t}-\mathrm{T}_{0}=\int_{\mathbb{X}^{t}}(\Lambda_{\circ}^{+}(%
\mathrm{d}x)\mathrm{D}(x)+\mathrm{D}(x)\Lambda_{-}^{\circ}(\mathrm{d}%
x))=\int_{\mathbb{X}^{t}}\mathrm{dT}_{t(x)}
\end{equation*}
with respect to the Wiener stochastic measure $w(\bigtriangleup)$, $%
\bigtriangleup\in\mathfrak{F}_{\mathbb{X}}$, which is represented in $%
\mathcal{G}_{\ast}$ by commuting operators $\widehat{w}(\bigtriangleup)=%
\Lambda_{\circ }^{+}(\bigtriangleup)+\Lambda_{-}^{\circ}(\bigtriangleup)$.
Consider a particular case when the operators $\mathrm{T}_{0},\,\mathbf{D}%
(x) $, and consequently $\mathrm{T}_{t}$ are multiplications by anticipating
functions $T_{0}(w)$,$\,\mathit{D}(x,w)$, and $T_{t}(w)$, of $w$, that is, $%
\mathrm{T}_{0}=T_{0}(\widehat{w})$,$\,\mathrm{D}(x)=\mathit{D}(x,\widehat{w}%
) $, and $\mathrm{T}_{t}=T_{t}(\widehat{w})$. Then the operators $\partial
\mathrm{T}\left( x\right) =[\nabla_{x},\,\mathrm{T}_{t(x)}]=\boldsymbol{%
\epsilon}(\dot{T}_{t(x)}(x))$ are defined by the Malliavin derivative $\dot{T%
}_{t}(x,w)|_{t=t\left( x\right) }$ as the Wiener representation of the point
split $\dot{T}_{t(x)}(x,\vartheta)=T_{t(x)}(x\sqcup\vartheta)$ of
operator-valued kernels in the multiple stochastic integral%
\begin{equation*}
T_{t}(w)=\int T_{t}(\vartheta)w(\mathrm{d}\vartheta)\equiv I_{w}(T_{t}).
\end{equation*}
In this particular case \textup{(\ref{two10})} was also obtained by Nualart
in \textup{\cite{NuaP88}}. Note that in the weak form we can write $\partial%
\mathrm{T}(x)^\ast\mathrm{D}(x){=}\boldsymbol{\nabla}_x^\ddag\boldsymbol{%
\partial}\mathrm{T} (x)^\ddag\mathbf{D}(x)\boldsymbol{\nabla}_x$, where
\begin{equation*}
\boldsymbol{\partial}\mathrm{T}(x):=\left[
\begin{array}{ccc}
0 & \partial\mathrm{T}(x) & 0 \\
0 & 0 & \partial\mathrm{T}(x) \\
0 & 0 & 0
\end{array}
\right].
\end{equation*}

We note that in the $\mathrm{Q}$-adapted case we always have $\mathbf{T}%
\left( x\right) = \mathrm{T}_{t(x)}\otimes\mathbf{Q}\left( x\right) $ with $%
\mathrm{Q}_{\circ}^{\circ}(x)=\mathrm{Q}(x)$, $\mathrm{Q}_{-}^{-}\left(
x\right) =\mathrm{1}=\mathrm{Q}_{+}^{+}\left( x\right) $ and otherwise $%
\mathrm{Q}_{\nu}^{\mu}(x)=0$. Obviously, the product $\mathrm{T}_{t}^{\ast}%
\mathrm{T}_{t}$ for $\mathrm{Q}$-adapted process $\mathrm{T}_{t}$ remains $%
\mathrm{Q}$-adapted iff $\mathrm{Q}$ is orthoprojector $\mathrm{Q}=\mathrm{Q}%
^\ast\mathrm{Q}=\mathrm{Q}^\ast$. Note that if $\mathrm{Q}\neq\mathrm{I}$
then the operator $\partial\mathrm{T}=[\nabla_x,\mathrm{T}_{t(x)}]\neq0$.
However, we may replace this commutator with the $\mathrm{Q}$-commutator
\begin{equation*}
[\nabla_x,\mathrm{T}_{t(x)}]_{\mathrm{Q}}:=\nabla_x\mathrm{T}_{t(x)}-\mathrm{%
Q}(x)\mathrm{T}_{t(x)}\nabla_x,
\end{equation*}
which vanishes when $\mathrm{T}_t$ is $\mathrm{Q}$-adapted.

\begin{corollary}
\label{2C 2} The quantum stochastic process $\mathrm{T}_{t}=\boldsymbol{%
\epsilon}(T_{t})$ is $\mathrm{Q}$-adapted if and only if the kernel process $%
T_{t}$ is $\mathrm{Q}$-adapted in the sense that
\begin{equation*}
T_{t}(\sigma,\varkappa,\tau)=\int T_{t}\left(
\begin{array}{ll}
\vartheta, & \tau \\
\sigma, & \varkappa%
\end{array}
\right) \mathrm{d}\vartheta=T_{t}(\sigma^{t},\varkappa^{t},\tau^{t})\otimes%
\delta_{\emptyset}(\sigma_{ [t})\mathrm{Q}^{\otimes}(\varkappa_{[t})\delta_{%
\emptyset}(\tau_{ [t}),
\end{equation*}
where $\delta_{\emptyset}(\varkappa)=1$ if $\varkappa=\emptyset$, $%
\delta_{\emptyset}(\varkappa)=0$ if $\varkappa\neq\emptyset$, $\varkappa
^{t}=\varkappa\cap \mathbb{X}^{t}$, $\varkappa_{[t}=\{x\in\varkappa:t(x)\geq
t\}$. The quantum-stochastic It\^{o} formula \textup{(\ref{two9})} for such
processes can be written in the strong form
\begin{align*}
\mathrm{T}_{t}^{\ast}\mathrm{T}_{t}-\mathrm{T}_{0}^{\ast}\mathrm{T}_{0} &
=\int_{\mathbb{X}^{t}}(\mathrm{T}_{t(x)}^{\ast}\mathrm{dT}(x)+\mathrm{dT}%
^{\ast}(x)\mathrm{T}_{t(x)}+\mathrm{dT}^{\ast}(x)\mathrm{dT}(x)) \\
& =\boldsymbol{i}_{0}^{t}(\mathbf{T}_+^{\ddagger}\mathbf{T}_+-\mathrm{T}%
^{\ast}\mathrm{T}\otimes\mathbf{Q}^\ddag \mathbf{Q}),
\end{align*}
where $\mathbf{Q}(x)=\left[ \mathrm{Q}_{\nu}^{\mu}\left( x\right) \right] $
is the block-diagonal operator $\mathrm{Q}_{-}^{-}=1=\mathrm{Q}_{+}^{+}$, $%
\mathrm{Q}_{\circ}^{\circ }=\mathrm{Q}$, and
\begin{gather*}
\mathrm{dT}(x)=\Lambda(\mathbf{D},\mathrm{d}x),\,\mathrm{dT}^{\ast
}(x)=\Lambda(\mathbf{D}^{\ddagger},\mathrm{d}x), \\
\mathrm{dT}^{\ast}(x)\mathrm{dT}(x)=\Lambda(\mathbf{D}^{\ddagger}\mathbf{D},%
\mathrm{d}x).
\end{gather*}
This can be written in the weak form \textup{(\ref{two10})}, where $%
\nabla_{x}\mathrm{T}_{t(x)}\chi=[\mathrm{T}_{t(x)}\otimes\mathrm{Q}(x)]%
\mathring{\chi}\left( x\right) $.
\end{corollary}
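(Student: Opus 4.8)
The plan is to handle the corollary in three stages -- the kernel characterization of $\mathrm{Q}$-adaptedness, then the strong It\^{o} formula, and finally its weak restatement -- each reducing to a result already in hand. For the characterization I would start from the operator definition of $\mathrm{Q}$-adaptedness in \cite{BelB10}, namely that $\mathrm{T}_t$ factorizes across the past/future splitting of $\mathcal{G}_\ast$ with the future tensor factor acting as $\mathrm{Q}^{\otimes}$. Writing each chain as $\vartheta=\vartheta^t\sqcup\vartheta_{[t}$ and inserting this factorization into the defining formula \textup{(\ref{two2})} for $\boldsymbol{\epsilon}(T_t)$ (with the internal slot $\vartheta_+^-$ integrated out, as in the reduced kernel $T_t(\sigma,\varkappa,\tau)$), I would read off the action on the three remaining slots: the creation slot $\sigma=\vartheta_+^{\circ}$ and the annihilation slot $\tau=\vartheta_\circ^-$ can carry no future points, forcing the factors $\delta_\emptyset(\sigma_{[t})$ and $\delta_\emptyset(\tau_{[t})$, while the gauge slot $\varkappa=\vartheta_\circ^{\circ}$ contributes the product weight $\mathrm{Q}^{\otimes}(\varkappa_{[t})$. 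Running the same computation backwards from the stated factorization of $T_t$ shows $\boldsymbol{\epsilon}(T_t)$ acts as $\mathrm{Q}^{\otimes}$ on the future, establishing the equivalence; here one uses that $\boldsymbol{\epsilon}$ is a faithful $\star$-representation so that the factorization transfers between kernel and operator.

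For the strong form I would invoke part $(\mathrm{i})$ of the Main Theorem in the form $\boldsymbol{\epsilon}(T)^{\ast}\boldsymbol{\epsilon}(T)=\boldsymbol{\epsilon}(T^{\star}\cdot T)$ together with the dual of \textup{(\ref{two9})}, which with the conjugation table of the preceding Proposition reads $\mathrm{T}_t^{\ast}\mathrm{T}_t-\mathrm{T}_0^{\ast}\mathrm{T}_0=\boldsymbol{i}_0^t(\mathbf{T}_+^{\ddagger}\mathbf{T}_+-\mathbf{T}^{\ddagger}\mathbf{T})$. The key simplification is the $\mathrm{Q}$-adapted germ $\mathbf{T}(x)=\mathrm{T}_{t(x)}\otimes\mathbf{Q}(x)$ recorded just above the corollary, with $\mathbf{Q}$ block-diagonal; since the pseudo-Euclidean conjugation acts only on the point factor, $\mathbf{T}^{\ddagger}=\mathrm{T}_{t(x)}^{\ast}\otimes\mathbf{Q}^{\ddagger}$, and the block multiplication $(\mathbf{XY})_\nu^\mu=\sum_\kappa\mathrm{X}_\kappa^\mu\mathrm{Y}_\nu^\kappa$ collapses the quadratic term to $\mathbf{T}^{\ddagger}\mathbf{T}=\mathrm{T}_{t(x)}^{\ast}\mathrm{T}_{t(x)}\otimes\mathbf{Q}^{\ddagger}\mathbf{Q}$, the two factors living on disjoint tensor legs. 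Substituting this single term yields the claimed $\boldsymbol{i}_0^t(\mathbf{T}_+^{\ddagger}\mathbf{T}_+-\mathrm{T}^{\ast}\mathrm{T}\otimes\mathbf{Q}^{\ddagger}\mathbf{Q})$; expanding $\mathbf{T}_+=\mathbf{T}+\mathbf{D}$ through the Proposition's decomposition $\mathbf{T}_+^{\ddagger}\mathbf{T}_+-\mathbf{T}^{\ddagger}\mathbf{T}=\mathbf{D}^{\ddagger}\mathbf{T}+\mathbf{T}^{\ddagger}\mathbf{D}+\mathbf{D}^{\ddagger}\mathbf{D}$ and identifying $\boldsymbol{i}_0^t(\mathbf{D})=\int\Lambda(\mathbf{D},\mathrm{d}x)$ separates the three summands into $\mathrm{dT}^{\ast}(x)\mathrm{T}_{t(x)}$, $\mathrm{T}_{t(x)}^{\ast}\mathrm{dT}(x)$ and $\mathrm{dT}^{\ast}(x)\mathrm{dT}(x)$, with $\mathrm{dT}(x)=\Lambda(\mathbf{D},\mathrm{d}x)$ and $\mathrm{dT}^{\ast}(x)=\Lambda(\mathbf{D}^{\ddagger},\mathrm{d}x)$, which is precisely the first displayed line of the strong formula.

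The weak form then follows directly from the Proposition's identity \textup{(\ref{two10})}, so I need only verify that in the $\mathrm{Q}$-adapted case the Malliavin term reduces as claimed. Because $\mathbf{Q}(x)$ is block-diagonal its creation entry $\mathrm{Q}_+^{\circ}$ vanishes, whence $\mathrm{T}_+^{\circ}(x)=\mathrm{T}_{t(x)}\otimes\mathrm{Q}_+^{\circ}(x)=0$ while $\mathrm{T}_\circ^{\circ}(x)=\mathrm{T}_{t(x)}\otimes\mathrm{Q}(x)$; substituting into $\nabla_x\mathrm{T}_{t(x)}\chi=\mathrm{T}_+^{\circ}(x)\chi+\mathrm{T}_\circ^{\circ}(x)\mathring{\chi}(x)$ gives exactly $\nabla_x\mathrm{T}_{t(x)}\chi=[\mathrm{T}_{t(x)}\otimes\mathrm{Q}(x)]\mathring{\chi}(x)$, so \textup{(\ref{two10})} specializes as stated.

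I expect the main obstacle to be the justification of the \emph{strong} form rather than the index bookkeeping: passing from the intrinsically non-adapted differential \textup{(\ref{two9})}, which in general closes only weakly through the Malliavin derivative $\nabla_x$, to an honest strong operator identity requires the $\mathrm{Q}$-adapted factorization to pull $\mathrm{T}_{t(x)}$ out as a genuine left factor of the future-supported differential $\mathrm{dT}(x)$, so that the products $\mathrm{T}_{t(x)}^{\ast}\mathrm{dT}(x)$ are well-defined operators rather than merely kernel expressions. This is where the remark preceding the corollary enters -- that $\mathrm{T}_t^{\ast}\mathrm{T}_t$ stays $\mathrm{Q}$-adapted exactly when $\mathrm{Q}$ is an orthoprojector, so that $\mathbf{Q}^{\ddagger}\mathbf{Q}=\mathbf{Q}$ and the resulting process remains in the $\mathrm{Q}$-adapted class -- and care is needed to confirm that each entry produced by the Proposition's conjugation table lands in the correct index slot under the factorization $\mathbf{T}^{\ddagger}=\mathrm{T}_{t(x)}^{\ast}\otimes\mathbf{Q}^{\ddagger}$ before matching it against the $\Lambda$-measure representation.
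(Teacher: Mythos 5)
Your proposal is correct and follows essentially the same route as the paper, which offers no separate proof of the corollary but assembles it exactly as you do: the kernel factorization read off from the representation \textup{(\ref{two2})}, the germ identity $\mathbf{T}(x)=\mathrm{T}_{t(x)}\otimes\mathbf{Q}(x)$ substituted into the dual of \textup{(\ref{two9})} so that $\mathbf{T}^{\ddagger}\mathbf{T}=\mathrm{T}^{\ast}\mathrm{T}\otimes\mathbf{Q}^{\ddagger}\mathbf{Q}$, and the vanishing of $\mathrm{T}_{+}^{\circ}=\mathrm{T}\otimes\mathrm{Q}_{+}^{\circ}$ for the weak form \textup{(\ref{two10})}. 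Your closing worry is overcautious: \textup{(\ref{two9})} is already a strong operator identity by the Main Theorem, and the orthoprojector condition is needed only for the product to remain $\mathrm{Q}$-adapted, not for the displayed formula to hold.
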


\begin{remark}
Let the quantum stochastic process $\mathrm{T}_t=\boldsymbol{\epsilon}(T_t)$ be given
by the multiple QS integral of the kernel $\mathrm{M}=\boldsymbol{\epsilon}%
(M)$ such that $T_t=\boldsymbol{\nu}^t_0(M)$, then $\mathrm{T}_t$ is $%
\mathrm{Q}$-adapted if and only if $\mathrm{M}(\boldsymbol{\upsilon})={M}(%
\boldsymbol{\upsilon})\otimes\mathrm{Q}^\otimes$ up to equivalence with
respect to some $\star$-kernel $N$ satisfying null condition $\boldsymbol{\nu%
}^t_0(\boldsymbol{\vartheta},N)=0\;\forall\;\vartheta\in\mathcal{X}$, such
that
\begin{equation}
T_t(\boldsymbol{\vartheta})=T_t(\boldsymbol{\vartheta}^t)\otimes Q(%
\boldsymbol{\vartheta}_{[t})\;\;\Leftrightarrow \;\;M(\boldsymbol{\upsilon},%
\boldsymbol{\varkappa})= M(\boldsymbol{\upsilon})\otimes Q(\boldsymbol{%
\varkappa})+N(\boldsymbol{\upsilon},\boldsymbol{\varkappa}),
\end{equation}
for all $t>0$, and for all chain-tables $\boldsymbol{\upsilon}\sqcup%
\boldsymbol{\varkappa}=\boldsymbol{\vartheta}= \boldsymbol{\vartheta}^t\sqcup%
\boldsymbol{\vartheta}_{[t}$ with $\boldsymbol{\vartheta}^t=\boldsymbol{%
\vartheta}\cap\mathbb{X}^t$, $\boldsymbol{\vartheta}_{[t}=\boldsymbol{%
\vartheta}\cap\mathbb{X}_{[t}$, and $\boldsymbol{\vartheta}_{[t}=\boldsymbol{%
\varkappa}_{[t}\subseteq\boldsymbol{\varkappa}$, and we have identified $M(%
\boldsymbol{\upsilon})\equiv M(\boldsymbol{\upsilon},\boldsymbol{\emptyset})$%
.
\end{remark}

If the integral kernel is of the form $M\otimes Q$ then it follows trivially
that the process $T_t$ is $Q$-adapted. Now consider the $Q$-Meyer transform
of $T_t$ given as
\begin{equation*}
M_t(\boldsymbol{\upsilon})=\sum_{\boldsymbol{\sigma}\sqcup\boldsymbol{\tau}=%
\boldsymbol{\upsilon}}T_t( \boldsymbol{\sigma})\otimes [-Q](\boldsymbol{\tau}%
),
\end{equation*}
where $\boldsymbol{\sigma}=(\sigma^\mu_\nu)$, $\boldsymbol{\tau}%
=(\tau^\mu_\nu)$, $\sigma^\mu_\nu,\tau^\mu_\nu\in{\mathcal{X}}$, and $[-Q](%
\boldsymbol{\varkappa}):=\otimes_{\boldsymbol{x}\in\boldsymbol{\varkappa}}-Q(%
\boldsymbol{x})$, then it follows that $T_t$ is given by the $Q$-M\"obius
transform
\begin{equation*}
T_t(\boldsymbol{\vartheta})=\sum_{\boldsymbol{\upsilon}\subseteq\boldsymbol{%
\vartheta}^t}M_t( \boldsymbol{\upsilon})\otimes Q(\boldsymbol{\vartheta}%
\setminus\boldsymbol{\upsilon}),
\end{equation*}
that is $T_t=\boldsymbol{\nu}^t_0(M_t\otimes Q)$. Now suppose that there is
another kernel $M^{\prime}$ such that we have $T_t=\boldsymbol{\nu}%
^t_0(M^{\prime})$, then by linearity of $\nu^t_0$ we find that $\boldsymbol{%
\nu}^t_0(M_t\otimes Q -M^{\prime})=0$, thus indeed $M^{\prime}=M_t\otimes Q
+N$, where $\boldsymbol{\nu}^t_0(N)=0$. Notice however that we now have the
Maassen-Meyer kernel $M_t$ depending on the terminal time $t$.

\end{document}